%% file: main.tex
\documentclass[sigconf,balance=false]{acmart}
\usepackage{popets}

\setcopyright{popets}
\copyrightyear{YYYY}

\acmYear{YYYY}
\acmVolume{YYYY}
\acmNumber{X}
\acmDOI{XXXXXXX.XXXXXXX}
\acmISBN{}
\acmConference{Proceedings on Privacy Enhancing Technologies}
\usepackage{enumitem}
\usepackage{algorithm}
\usepackage{algpseudocode}
\usepackage{amsmath}
\usepackage{amsthm}
\usepackage{multirow}
\usepackage{subcaption}
\usepackage[most]{tcolorbox}

\input{newcommands}

\begin{document}

\title[Differentially Private RAG]{Differentially Private Retrieval-Augmented Generation}


\author{Tingting Tang}
\affiliation{%
  \institution{University of Southern California}
  \city{Los Angeles}
  \state{California}
  \country{USA}}
\email{tangting@usc.edu}

\author{James Flemings}
\affiliation{%
  \institution{University of Southern California}
  \city{Los Angeles}
  \state{California}
  \country{USA}}
\email{jamesf17@usc.edu}

\author{Yongqin Wang}
\affiliation{%
 \institution{University of Southern California}
  \city{Los Angeles}
  \state{California}
  \country{USA}}
\email{yongqin@usc.edu}

\author{Murali Annavaram}
\affiliation{%
 \institution{University of Southern California}
 \city{Los Angeles}
 \state{California}
 \country{USA}}
\email{annavara@usc.edu}


\renewcommand{\shortauthors}{Tang et al.}

\input{body/0_abstract}

\keywords{Differential Privacy, Retrieval-Augmented Generation, Large Language Models, Data Privacy}

\maketitle

\input{body/1_introduction}
\input{body/2_prelim}
\input{body/3_problemsettings}
\input{body/4_method}
\input{body/5_privacyanalysis}
\input{body/6_experiment}
\input{body/7_relatedwork}
\input{body/8_conclusion}


\section*{Acknowledgment}
The authors used Grammarly and ChatGPT4.o to detect and correct grammatical errors in this paper.

\bibliographystyle{ACM-Reference-Format}
\bibliography{references}

\input{body/9_appendix}









\end{document}

%% file: newcommands.tex
\theoremstyle{plain}
\newtheorem{theorem}{Theorem}[section]

\newtheorem{lemma}[theorem]{Lemma}

\theoremstyle{definition}
\newtheorem{definition}[theorem]{Definition}

\theoremstyle{remark}
\newtheorem{remark}[theorem]{Remark}

\newcommand{\method}{\texttt{DP-KSA}}

%% file: body/0_abstract.tex
\begin{abstract}
  Retrieval-augmented generation (RAG) is a widely used framework for reducing hallucinations in large language models (LLMs) on domain-specific tasks by retrieving relevant documents from a database to support accurate responses. However, when the database contains sensitive corpora, such as medical records or legal documents, RAG poses serious privacy risks by potentially exposing private information through its outputs. Prior work has demonstrated that one can practically craft adversarial prompts that force an LLM to regurgitate the augmented contexts. A promising direction is to integrate differential privacy (DP), a privacy notion that offers strong formal guarantees, into RAG systems. However, naively applying DP mechanisms into existing systems often leads to significant utility degradation. Particularly for RAG systems, DP can reduce the usefulness of the augmented contexts leading to increase risk of hallucination from the LLMs. Motivated by these challenges, we present \method{}, a novel privacy-preserving RAG algorithm that integrates DP using the propose-test-release (PTR) paradigm. \method{} follows from a key observation that most question-answering (QA) queries can be sufficiently answered with a few keywords. Hence, \method{} first obtains an ensemble of relevant contexts, each of which will be used to generate a response from an LLM. We utilize these responses to obtain the most frequent keywords in a differentially private manner. Lastly, the keywords are augmented into the prompt for the final output. This approach effectively compresses the semantic space while preserving both utility and privacy. We formally show that \method{} provides formal DP guarantees on the generated output with respect to the RAG database. We evaluate \method{} on two QA benchmarks using three instruction-tuned LLMs, and our empirical results demonstrate that \method{} achieves a strong privacy-utility tradeoff.
\end{abstract}

%% file: body/1_introduction.tex
\section{Introduction}
\label{sec:intro}
Large language models (LLMs) encode copious factual knowledge in their parameters through pre-training on internet-scale data \cite{bert}. Hence, LLMs leverage its knowledge when prompted to accurately answer queries \cite{petroni2019language, radford2019language}. However, the knowledge that the LLM was pre-trained on (1) may not be precisely accessed and utilized and (2) can eventually become outdated. This can lead to discrepancies between generated content by the LLM and verifiable real-world facts, known as \textit{factuality hallucinations} \cite{huang2025survey}. 

\input{body/problem_setup_rag}

A widely-adopted approach to mitigate hallucinations is to update the LLMs' knowledge base by augmenting prompts with external knowledge retrieved from a database, known as retrieval-augmented generation (RAG) \citep{rag, selfrag}. This involves retrieving the top documents from an external database that are semantically relevant (similar) to a user query. However, a major concern is that the external database can contain highly-sensitive information, such as Personable Identifiable Information (PII). For example, healthcare providers can leverage internal medical records to offer accurate diagnoses and tailored care recommendations, while law firms may rely on their legal case repositories to support clients in conducting legal research and preparing documentation. Several have found that RAG on a sensitive corpus can leak private information about individual documents in the corpus \citep{ragattack1, ragattack2, ragattack3}. An attacker can design a specific prompt to a RAG system and use the output to reveal the information retrieved from the sensitive database. 

\input{body/overview}

In this work, we focus on preserving the privacy of the external dataset with the following problem setup shown in Figure \ref{fig:problem_setting}. A RAG system receives a query $\mathbf{x}$. Our RAG system contains a private external dataset $D$, a retriever model $R$ that will retrieve the top documents relevant to the query $\mathbf{x}$, and a generator model $F$ that will use the top retrieved documents and the query to generate a response $\mathbf{y}$. Our setup assumes the adversary only has query access to the RAG system to obtain responses. Hence, because the response can contain information about the retrieved documents, the adversary can adversarially craft prompts such that the response reveals information about the external database. Therefore, our goal is to design a privacy-preserving RAG system to protect private, sensitive information against such attacks. 

One promising privacy notion is Differential Privacy (DP) \citep{dp}, which provides a mathematical guarantee that each individual in a database has limited affect on the outcome of a randomized algorithm. In the context of RAG, each document in the RAG database has limited influence on the output of the LLM. Although DP is the standard privacy safeguard for applications utilizing information-sensitive data, it tends to result in substantial utility degradation. To highlight the challenges that privately generating text presents, note that the response from the generator model $F$ iteratively samples the next token $y_t$ until a stop criteria is met, such as maximum token length reached. Hence, the range of possible values that $y_t$ can take is equal to the vocabulary of the generator's tokenizer, which can be upwards of $50,000$. And if the maximum token length is $10$, then the output space of possible responses is at most $50,000^{10}$. Hence, naively applying standard DP mechanisms, such as additive Gaussian noise, at every token generation could destroy any meaningful utility from the generated tokens. 

Thus, the goal our work is the following: 
\begin{quote}
    \emph{How can we integrate differential privacy into practical RAG systems while preserving their utility?}
\end{quote}

To address the above question, we propose \method{}, a privacy-preserving algorithm based on the propose-test-release (PTR) \citep{ptr} and subsample-and-aggregate \cite{sampleandagg} paradigm for private text generation illustrated in Figure \ref{fig:method_overview}. \method{} derives from a key observation that most queries from question-answering datasets can be answered sufficiently with just a few keywords. Hence, we convert the output space of responses for question-answering tasks into a keyword subspace, then perform differentially privacy to extract the keywords. This keyword subspace mostly preserves the relative semantic representation of the original response space, i.e. it maintains the utility of the LLM while operating in a low-dimensional approximation of the entire sentence space. To achieve this, a retriever model will first retrieve the top-$N$ documents that are most relevant to the query. Next, we partition the documents into prompts, each prompt containing a retrieved document with the corresponding query, and feed these prompts into the generator model to obtain an ensemble of responses. Then \method{} transforms the generated model responses into keywords that preserve relative semantic meaning and privately extracts the frequently occurring keywords in the responses via the propose-test-release paradigm. Finally, the privately obtained keywords are augmented to a prompt containing just the query, which is then fed to the LLM to generate the final output. 

We summarize the \textbf{contributions of our work} as follows:
\begin{enumerate}
    \item We introduce \method{}, a simple RAG framework that preserves the privacy of the external database by generating a response for each of the retrieved documents from the retriever model, then extracting a small set of keywords from the ensemble of responses.
    \item We formally show that the final outputs generated by \method{} can achieve differential privacy, a strong privacy notion that gives a probable guarantee of the privacy leakage of the external database. 
    \item We experimentally demonstrate that \method{} can achieve strong privacy guarantees while preserving utility. In particular, we experimentally evaluated \method{} on two standard benchmarking question-answering datasets with modern instruction-tuned LLMs, Qwen 2.5 \citep{qwen25} and Llama 3 \citep{llama3}.
    \item We provide comprehensive ablation studies for important hyperparameters of \method{} to demonstrate the robustness of \method{} as well as provide insights into the inner workings of \method{}.
\end{enumerate}

%% file: body/problem_setup_rag.tex
\begin{figure}[!t]
  \centering
    \includegraphics[width=\columnwidth]{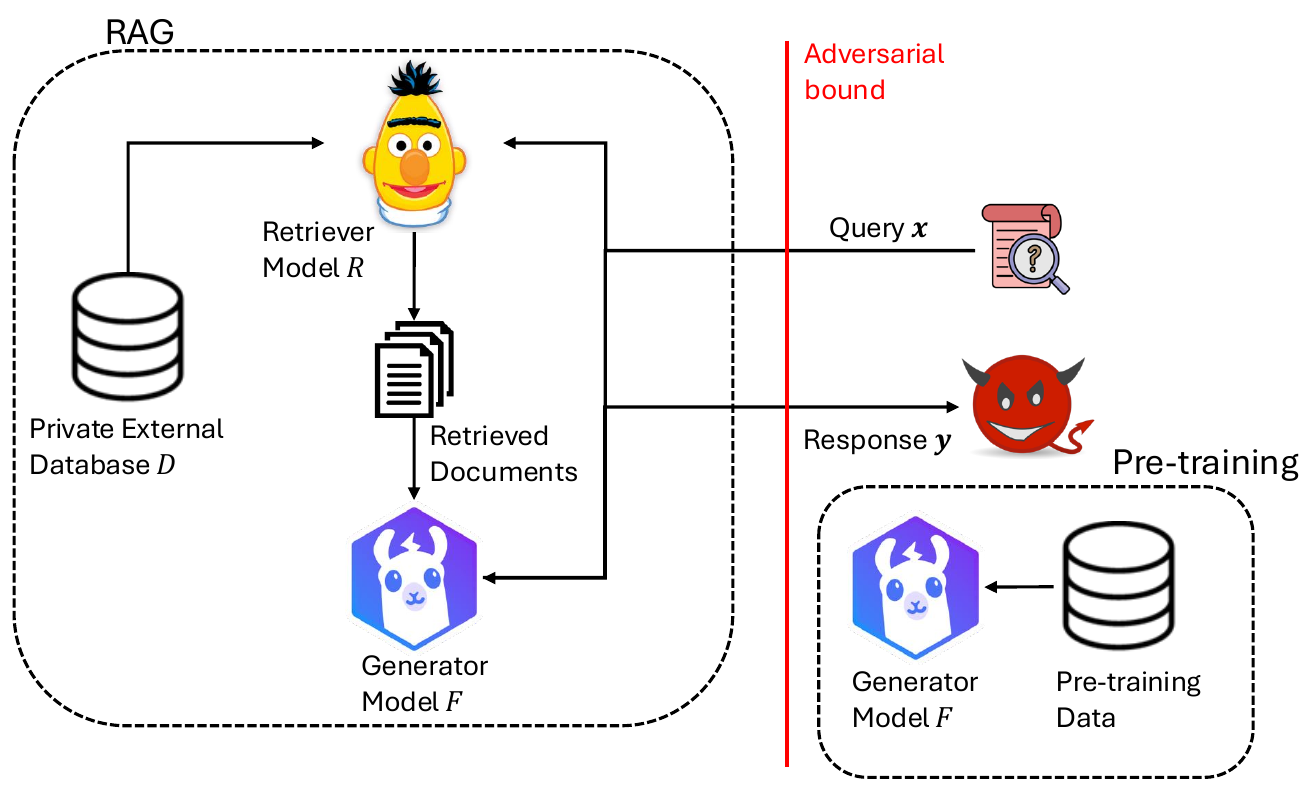}
    \caption{Overview of the DP RAG problem setting. The adversarial bound illustrates what capabilities the adversary has. In this case, the adversary can only query the RAG system and access the answer. The generator model has been pre-trained on publicly available data, which the adversary has access to. However, we are not concerned about preserving the privacy of the pre-training data.}
  \label{fig:problem_setting}
\end{figure}

%% file: body/overview.tex
\begin{figure*}[!t]
  \centering
    \includegraphics[width=\linewidth]{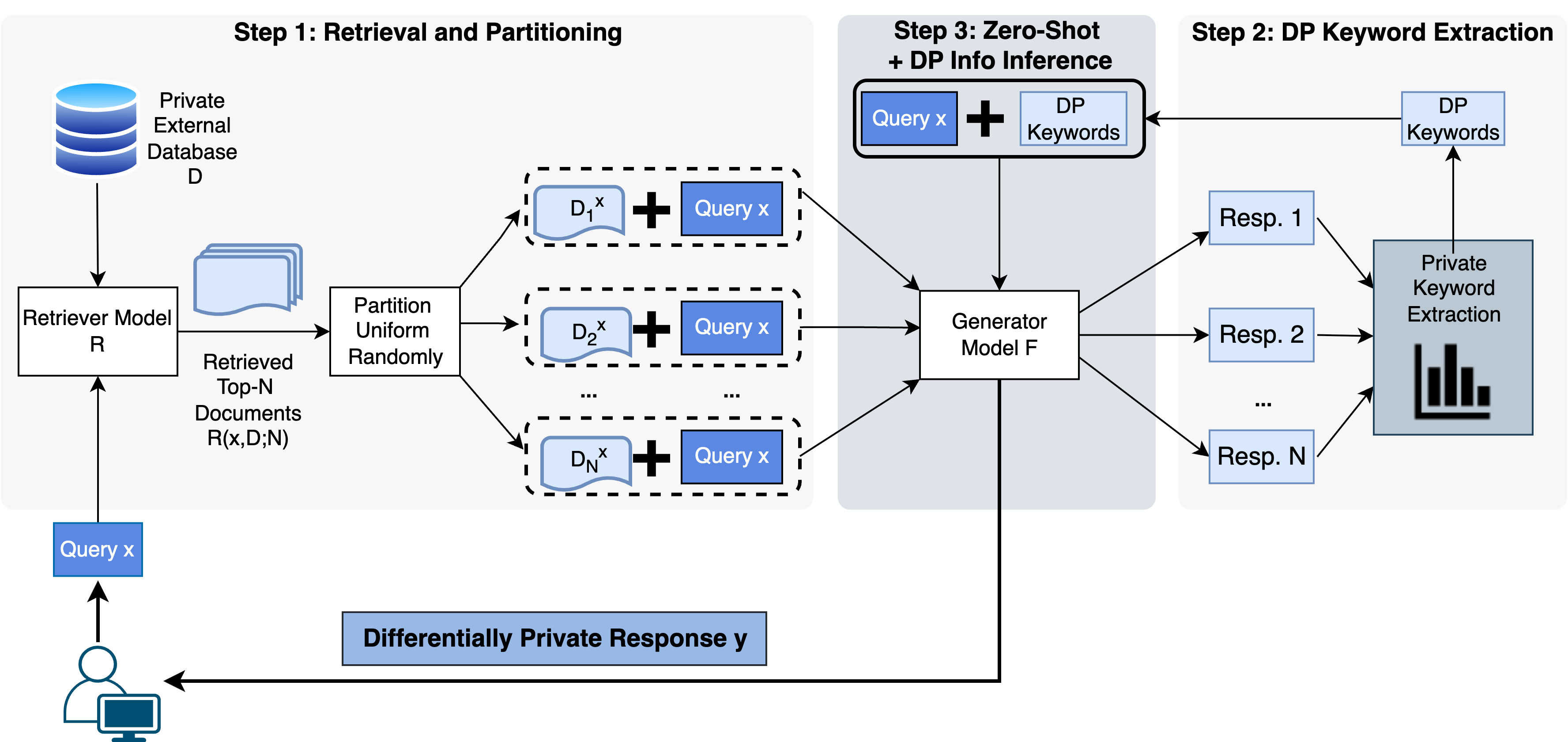}
    \caption{The proposed \method{} framework consists of three steps. First, it retrieves the top-$N$ documents most relevant to the query $\mathbf{x}$ from the private database $D$. Each retrieved document $D^{\mathbf{x}}_i$ is paired with query $\mathbf{x}$ and passed to the generator model $F$ to produce responses. Next, \method{} applies a differentially private mechanism to extract the most frequent keywords from the ensemble of responses. Finally, the selected keywords are combined with query $\mathbf{x}$ and fed back into the generator $F$ to produce the final output $\mathbf{y}$.}
  \label{fig:method_overview}
\end{figure*}

%% file: body/2_prelim.tex
\section{Preliminary}
\label{sec:prelim}

\subsection{Retrieval-Augmented Generation}
\label{prelim:rag}

Retrieval-Augmented Generation (RAG) is a hybrid architecture that augments a large language model (LLM) with an external retrieval mechanism to improve its performance on knowledge-intensive tasks. Given a user query \( \mathbf{x} \), a retriever model \( R \) is first used to fetch the top-\( N \) most relevant documents from a private corpus \( D \). Formally, the retriever returns a set of documents 
\begin{align}
     D_1^{\mathbf{x}}, \ldots, D_N^{\mathbf{x}} \leftarrow R(\mathbf{x}, D; N) 
\end{align}

where each \( D_i^{\mathbf{x}} \in D \) is deemed semantically similar to the query \( \mathbf{x} \). Semantic similarity is typically assessed by computing the distance between the vector representations of a query and a document. Commonly used distance functions include the dot product and cosine similarity.
Each document \( D_i^{\mathbf{x}} \) is then paired with the query and passed into a generator model \( F \), which synthesizes a response:
\begin{align}
    y_i \leftarrow F(D_i^{\mathbf{x}}, \mathbf{x})
\end{align}
The outputs \( y_i \) are natural language responses that reflect the LLM’s understanding conditioned on both the query and the retrieved document.

RAG offers several advantages over standalone language modeling. By retrieving top-\( N \) relevant documents \( D_1^{\mathbf{x}}, \ldots, D_N^{\mathbf{x}} \) from a corpus \( D \) using a retriever \( R \), and conditioning generation on these documents, RAG enables dynamic grounding without retraining the generator \( F \). This allows models to incorporate up-to-date or domain-specific information at inference time, which is especially valuable when \( D \) is frequently updated or too large to encode directly into model parameters.
A key benefit of RAG is its ability to reduce hallucinations. Since outputs \( y_i = F(D_i^{\mathbf{x}}, \mathbf{x}) \) are generated with direct reference to retrieved content, responses are more likely to be factually accurate. 
RAG is also modular as \( R \) and \( F \) can be trained or fine-tuned independently, enabling flexible adaptation across domains. 

\subsection{Differential Privacy}
\label{prelim:dp}
Differential privacy (DP) is the gold standard for reasoning about the privacy of machine learning algorithms. 

\begin{definition}[$(\epsilon, \delta$)-DP \cite{dp}]\label{def:approx_dp}
    For $\epsilon\geq 0, \delta\in[0, 1]$, a randomized algorithm $M$ is $(\epsilon, \delta)$-differentially private if for any pair of adjacent datasets $D$ and $D'$ that differ in only one data point, it holds that $Pr[M(D) \in E] \leq e^{\epsilon} \cdot Pr[M(D') \in E] + \delta$
\end{definition}

The above definition indicates that if two datasets are similar, a DP algorithm should produce similar output $E$ with a high probability so that attackers cannot infer the difference between them. In our case, $M$ functions as a RAG algorithm, producing answers to queries by utilizing private retrieved document as context. If this RAG algorithm adheres to differential privacy, it should generate similar outputs even when the retrieved documents vary. Consequently, this prohibits the generation of private information, such as replicating the retrieved context.


\subsubsection{Private Generation via Sample-and-Aggregate}
\label{dp:sampleandagg}
There has been a body of work on generating token sequences from an LLM with DP, all of which rely on the idea of the sample-and-aggregate framework in DP \citep{sampleandagg}. In this framework, the sensitive dataset is partitioned into pairwise disjoint subsets. Then, the model will perform inference on each subset to generate a response. Lastly, the responses are privately aggregated together using a DP mechanism \cite{duan2023flocks, dpfewshot, dpicl}, such as adding Gaussian noise to the aggregation. The reason for using the sample-and-aggregate framework is that it helps with calculating the global sensitivity, which is an important property needed for DP mechanisms. If we have two neighboring datasets $D$ and $D'$ that differ by one document $d_i$, then at most one subset will differ since the document can only be contained in at most one subset (due to the pairwise disjoint property of sample-and-aggregate). Hence, changing one document changes at most one response, which we can then use to argue the global sensitivity. 

\subsubsection{Private top-$k$ selection}
\label{dp:ptr}
The private top-$k$ selection problem is one of the most fundamental problems in privacy-preserving data analysis. The problem is given a set of candidates with corresponding counts, return the top-$k$ candidates based on the counts in a privacy-preserving way \cite{mcsherry2007mechanism}. In this work, we adopt an adaptive top-$k$ selection algorithm from \citet{zhu2022adaptive}, which is based on the widely-known propose-test-release (PTR) framework \cite{ptr}. The idea is that as long as the count difference between the $k$-th and the ($k+1$)-th highest candidates is larger than 2, we can non-privately release the top-$k$ candidates.



%% file: body/3_problemsettings.tex
\section{Problem Settings}
\label{sec:problemsetting}

In this section, we describe our problem settings and the threat model.

\textbf{Problem Settings. }
Following the problem setting in Figure \ref{fig:problem_setting}, suppose a RAG system receives an input query $\mathbf{x}$ and wants to generate a response using a generator model $F$. Typically $F$ is a decoder-only model such as LLaMA or Qwen. Also suppose the system has access to private external database $D$ to assist with answering the query. The goal is to generate the response $\mathbf{y}$ such that it is $(\epsilon, \delta)$-DP (definition \ref{def:approx_dp}) with respect to a private RAG database $D$.

\textbf{Threat Model. }
Additionally, our work assumes a realistic setup where an adversary can only query the RAG system with any prompt $\mathbf{x}$. The adversary has access only to (1) the response $\mathbf{y}$ and (2) the generator model $F$. Consequently, the adversary does not have access to the external database $D$. We can assume the adversary has access to the generator model, as this is usually publicly available to download from the internet. That is, the pre-training (and fine-tuning) data used to pre-train (and fine-tune) the generator LLM model is considered public and we are only concerned about preserving the privacy of the RAG database $D$. 

%% file: body/4_method.tex
\section{Methodology}
\input{body/method_NQ_histogram}
\input{body/method_TQA_histogram}

\label{sec:method}
In this section, we introduce \method{}, a DP RAG framework. We begin by discussing the motivation behind the design of \method{} in Section \ref{sec:motivation}. Then we will go through \method{} in more detail in Section \ref{method:description}.

\subsection{Motivation}\label{sec:motivation}
\label{method:example}
The challenge of generating differentially private text lies in the high dimensionality of the output space. The generator model $F$ typically generates text autoregressively, i.e. it iteratively samples the next token $y_{t+1}$ from $F$ by using the previously sampled tokens $y_1, ..., y_t$ plus some provided context $c$ to obtain $y_{t+1} \sim F(y_{t+1}|c, y_1, ..., y_t)$. Then we feed $y_{t+1}$ back to the model to sample the next token until the desired stop criteria is met (either max token length reached or eos token sampled). However, the number of possible tokens that $y_{t+1}$ can realize is equal to the vocabulary space of the generator model's tokenizer, which could be upwards to $50,000$. If the max generation length of the response $\mathbf{y}$ is $10$, then the output space of possible responses is $50,000^{10}$. Hence, naively applying DP noise each time we are generating the next token results in a large magnitude of noise, which can have a deleterious effect on the generated tokens and quickly destroy the utility. 

To overcome the large dimensionality of private text generation, the design of \method{} derives from a key observation that most question-answering datasets can be answered sufficiently with just a few keywords. To illustrate this, consider the following question-answering example from the Natural Question (NQ) dataset \citep{nq}, a widely-used RAG benchmarking dataset:

\begin{tcolorbox}[colframe=black, colback=gray!5, boxrule=1pt, arc=4pt]
\textbf{Question:} who lives in the imperial palace in tokyo?

\textbf{Ground Truth Answer:} the Imperial Family
\end{tcolorbox}

From this example, we see that the ground truth only contains three words (i.e., three tokens). Hence, extracting the keywords "Imperial" and "Family" would completely preserve the semantic meaning of the ground truth answer. To demonstrate that this observation holds more generally, Figure \ref{fig:method_nqhist} and Figure \ref{fig:method_tqahist} show the histograms of token lengths of ground truth answers in NQ dataset and Trivia Question-Answering (TQA) dataset \cite{trivia}, respectively. 
As we can see, the histograms are rightly skewed where most of the ground-truth answers contain only one to four tokens. The main takeaway here is that we can convert the task of accurately answering these questions into obtaining only a small set of correct tokens. Such a conversion effectively converts the QA task into a keyword space which can be considered as a low-dimensional approximation of the entire sequence space of correct answers. Consequently, operating in this lower dimensional space will help us preserve the utility as we integrate differential privacy into this space. 

\subsection{\method{}}
\label{method:description}
\input{body/dpkwrag_alg}
\input{body/ptr_alg}

\input{body/top_k_alg}

As shown in Figure \ref{fig:method_overview}, at a high level, our algorithm proceeds in three steps: (1) retrieval and partitioning, (2) DP keyword extraction, and (3) zero-shot with DP information inference. Algorithm \ref{alg:dpkwrag} succinctly describes \method{} and we go through the details below.

\textbf{Retrieval and partitioning.} First, we use a retriever model $R$ to obtain the top $N$ documents from the private database $D$ that are most relevant to the query $\mathbf{x}$ (line \ref{line:ret_docs}). Specifically, each document in the database is encoded into a dense vector representation by the retriever $R$. At query time, the retriever encodes the input query into its own dense vector and ranks documents based on a similarity measure, typically the dot product or cosine similarity, between the query and document vectors. This process identifies the documents most semantically similar to the query. 

Then, for each retrieved document $D^{\mathbf{x}}_i$, we partition them into chunks with each chunk containing a retrieved document and the input query. The document is augmented  with the user query to produce a prompt, which is fed to the generator model $F$ to generate a response (line \ref{line:responses}). Afterwards, we will obtain $N$ responses. 

\textbf{DP keyword extraction.} Next, we extract keywords from the ensemble of responses $\mathbf{y}_i$ $\forall i$ to be used for the final prompt. The intuition behind this approach is that the answers for QA datasets typically contain a few "correct" tokens. Hence, if the retrieved documents are relevant to the query, it is likely that the correct tokens are contained in the ensemble of responses generated based on different disjoint retrieved documents. This is inspired from prior work's use of PTR for DP text generation \cite{dpicl}. 

The key idea is that we convert keyword extraction into a private top-$k$ selection problem, then use existing solutions in this space. To this end, we form a histogram $\mathbf{H}$ by counting the frequency of each word token among the responses based on the individual retrieved documents. Then we use the histogram to obtain and release the top-$k$ tokens with the highest counts in a DP-manner. Specifically, first we adaptively estimate the optimal $\hat{k}$ to use (line \ref{line:adapt_k}), then extract the top-$\hat{k}$ keywords using \texttt{TopKWithPTR} (line \ref{line:join_em}). 

One subtle limitation is that applying a private selection algorithm, such as the exponential mechanism (EM) \cite{mcsherry2007mechanism}, for every keyword will require applying the EM $k$ times, which will decay the privacy loss $\epsilon$. Hence, \texttt{TopKWithPTR} only needs to be applied once to obtain the top-$\hat{k}$ by testing if $\mathbf{H}_{(k)} - \mathbf{H}_{(k+1)} > 2$. If the test holds, then the top-$k$ indices are exactly the same for all the neighboring datasets, so we can release the exact top-$k$ indices without any privatization. However, the test $\mathbf{H}_{(k)} - \mathbf{H}_{(k+1)} > 2$ must be performed in a differentially private way, which is done with PTR (Algorithm \ref{alg:ptr}). If the test fails, then no tokens can be release and hence the final prompt will resort to zero-shot.  

Moreover, the utility of \texttt{TopKWithPTR} is highest when $k$ is chosen to maximize $\mathbf{H}_{(k)} - \mathbf{H}_{(k+1)}$. Hence, $\hat{k}$ is estimated in a data-dependent way by releasing $\text{argmax}_k \mathbf{H}_{(k)} - \mathbf{H}_{(k+1)}$ using EM (Algorithm \ref{alg:rnm-find-k}). Here, $r(k)$ is a regularizer independent of the dataset, e.g., we can set $r(k) = - \infty$ for any $k > 30$ and $k < 15$, if we don't want to return more than 30 or less than 15 tokens. 

\textbf{Zero-shot + DP info inference.} Lastly, we use the DP released top-$\hat{k}$ tokens along with the user query $\mathbf{x}$ to generate the final response from the generator model $\mathbf{y} \gets F(\mathbf{x}, \{w_i\}_{i=1}^{\hat{k}})$ (line \ref{line:final_response}). Note that the final response does not explicitly use the retrieved documents for the final response, only the extracted keywords. If the test in the \texttt{TopKWithPTR} fails, then the final response will not contain any information from the external database.  

%% file: body/method_NQ_histogram.tex
\begin{figure}[!t]
  \centering
    \includegraphics[width=8cm]{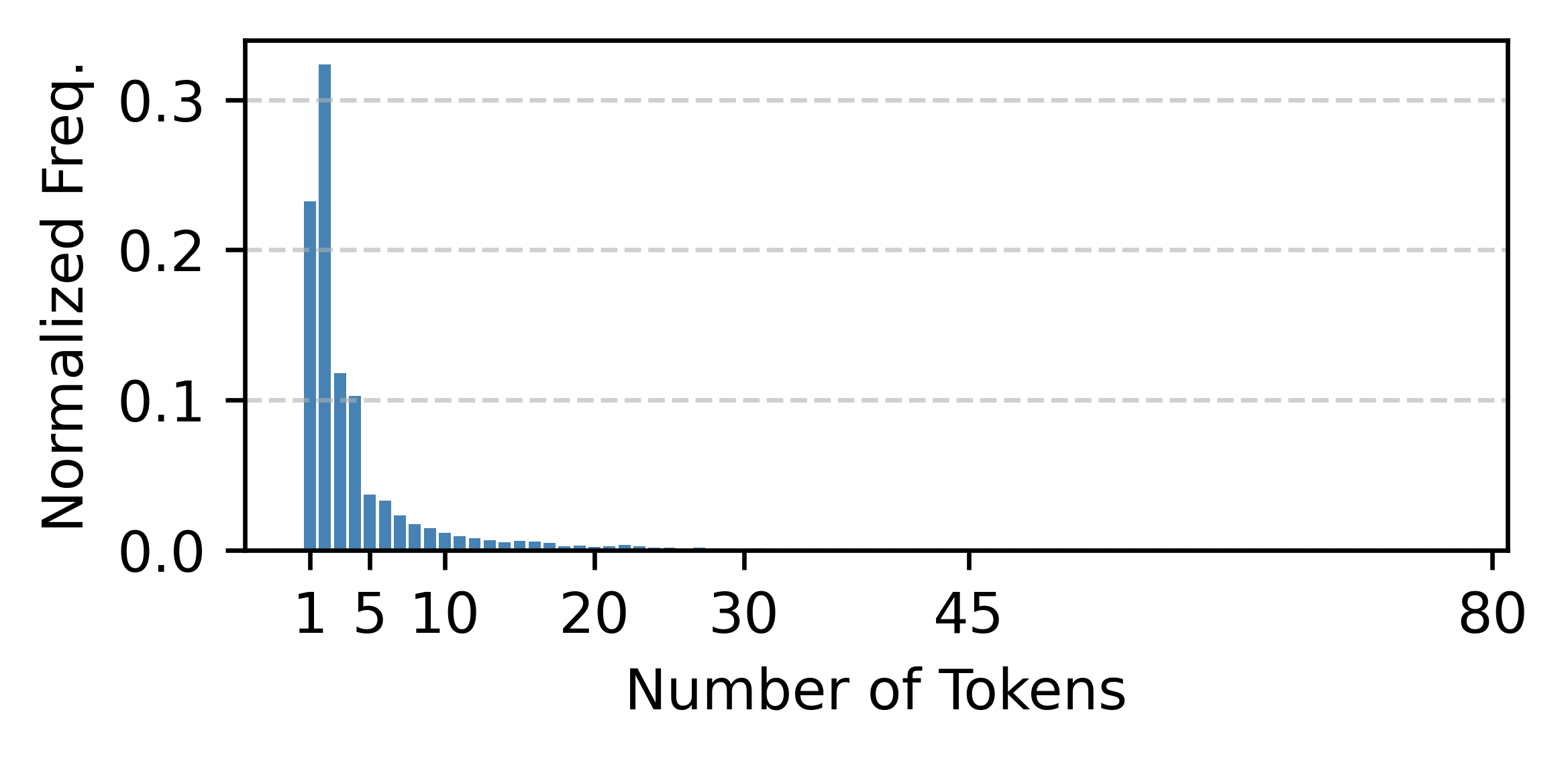}
  \caption{Histogram of token lengths of ground truth answers in NQ dataset.}
  \label{fig:method_nqhist}
\end{figure}

%% file: body/method_TQA_histogram.tex
\begin{figure}[!t]
  \centering
    \includegraphics[width=\linewidth]{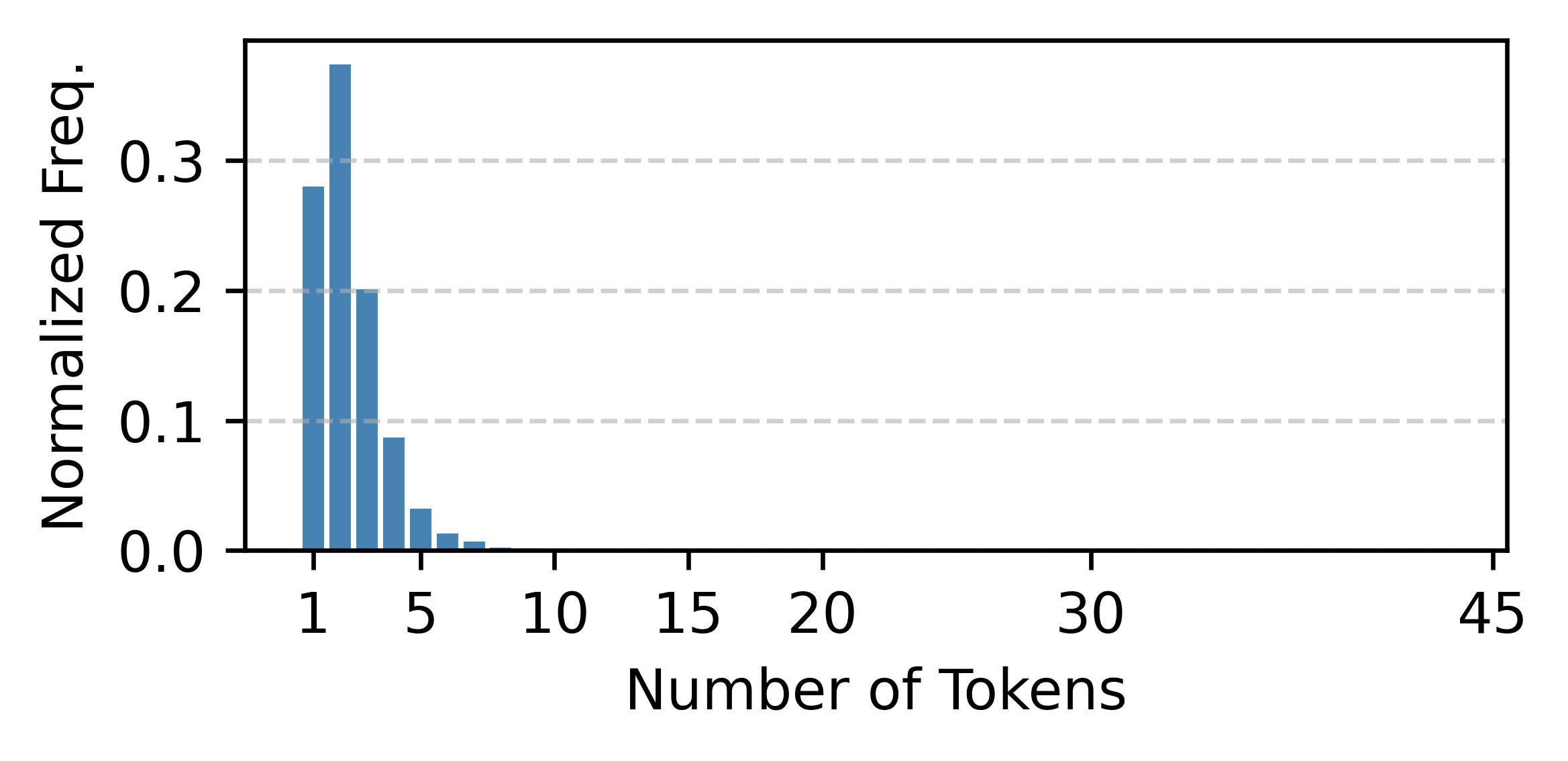}
  \caption{Histogram of token lengths of ground truth answers in TQA dataset.}
  \label{fig:method_tqahist}
\end{figure}

%% file: body/dpkwrag_alg.tex
\begin{algorithm}[!t]
    \caption{\method{} }
    \label{alg:dpkwrag}
    \begin{algorithmic}[1]
        \Require Generator model $F$, retriever model $R$, private database $D$, number of top documents $N$, user query $\mathbf{x}$, max number of generated tokens $T_{\text{max}}$
        \Ensure Generated response $\mathbf{y}$
        \State $D^{\mathbf{x}}_{1}, ..., D^{\mathbf{x}}_{N} \gets R(\mathbf{x}, D; N)$ Retrieve $N$ most relevant documents for query $\mathbf{x}$\label{line:ret_docs}
        \For {$1 \leq i \leq N$}
            \State $\mathbf{y}_i \gets F(D^{\mathbf{x}}_{i}, \mathbf{x})$\label{line:responses}
        \EndFor
        \State Form histogram $\mathbf{H}$ by counting the number of responses $\mathbf{y}_i$ that contain each token. 
        \State $\hat{k} = \texttt{FindBestK}(\mathbf{H})$ \label{line:adapt_k} select optimal number of keywords
        \State $w_{1}, ..., w_{\hat{k}} \gets \texttt{TopKWithPTR}(\hat{k}, \mathbf{H})$ obtain keywords \label{line:join_em}
        \State $\mathbf{y} \gets F(\mathbf{x}, \{w_i\}_{i=1}^{\hat{k}})$ zero-shot with keywords inference
        \State \textbf{Return} $\mathbf{y}$ \label{line:final_response}
    \end{algorithmic}
\end{algorithm}

%% file: body/ptr_alg.tex
\begin{algorithm}[!t]
\caption{\texttt{TopKWithPTR}}
\label{alg:ptr}
    \begin{algorithmic}[1]
      \Require $k$ -- the number of top counted tokens to release; $\mathbf{H}$ -- histogram for the counts of each token; $\delta$ -- failure probability
      \State \textbf{Set} $d_k := \mathbf{H}_{(k)} - \mathbf{H}_{(k+1)}$. 
      \State \textbf{Set} $\widehat d_k := \max(2, d_k) + \mathcal{N}(0, 4\sigma^2) - \Phi(1-\delta; 0, 2\sigma)$. 
    \State \textbf{If} $\widehat d_k > 2$, \textbf{Return} the exact top-$k$ tokens.
    \State \textbf{Else} Terminate with no keywords.
    \end{algorithmic}
\end{algorithm}

%% file: body/top_k_alg.tex
\begin{algorithm}[!t]
    \caption{\texttt{FindBestK}}
    \label{alg:rnm-find-k}
    \begin{algorithmic}[1]
      \Require $\mathbf{H}$ -- histogram for the counts of each token
      \State Compute histogram gap $d_k := \mathbf{H}_{(k)} - \mathbf{H}_{(k+1)}$ for each $k = 1 \ldots N-1$. 
      \State \textbf{Return} $\text{argmax}_k \{ d_k + r(k) + \texttt{Gumbel}(2 /\epsilon) \}$ 
    \end{algorithmic}
\end{algorithm}

%% file: body/5_privacyanalysis.tex
\section{Privacy Analysis}
\label{sec:privacyanalysis}
We now provide a formal guarantee of \method{}, in particular that it achieves $(\epsilon, \delta)$-DP. We provide a proof sketch below and defer the exact details in Appendix \ref{sec:appendix}.

\begin{theorem}
    Let $\mathbf{x}$ be a query received. Suppose we generate a response to $\mathbf{x}$, denoted as $\mathbf{y}$, using \method{} (Algorithm \ref{alg:dpkwrag}) with $F$ as the generator model, $R$ as the retrieval model, and $D$ as the private database. Then \method{} satisfies $(\epsilon, \delta)$-DP with respect to $D$. 
\end{theorem}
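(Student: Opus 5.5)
The argument decomposes Algorithm \ref{alg:dpkwrag} into a data-dependent core followed by post-processing. The data-dependent part is the map $D \mapsto (\hat{k}, w_1,\dots,w_{\hat{k}})$. The last step $\mathbf{y} \gets F(\mathbf{x}, \{w_i\})$ uses only the query, the public generator $F$, and the released keywords. By closure of DP under post-processing, it therefore suffices to show that the pair (\texttt{FindBestK}, \texttt{TopKWithPTR}) is DP with respect to $D$ and to compose their costs: \texttt{FindBestK} should be $(\epsilon_1,0)$-DP, \texttt{TopKWithPTR} should be $(\epsilon_2,\delta)$-DP, and basic composition then gives $(\epsilon_1+\epsilon_2,\delta)$. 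The $\epsilon$ in the theorem is read as this total budget, with $\sigma$ and the Gumbel scale calibrated to it.

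\textbf{Sensitivity.} First I bound the sensitivity of the histogram $\mathbf{H}$, following the sample-and-aggregate reasoning of Section \ref{dp:sampleandagg}. Each $\mathbf{y}_i$ depends on a single retrieved document, and $\mathbf{H}$ counts, for each token, the number of responses containing it, so every response contributes at most $1$ to each coordinate. I will argue that replacing one document in $D$ changes at most one of the $N$ retrieved documents, hence at most one response. It follows that every coordinate of $\mathbf{H}$ changes by at most $1$, and consequently every gap $d_k=\mathbf{H}_{(k)}-\mathbf{H}_{(k+1)}$ changes by at most $2$.

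\textbf{The two mechanisms.} Given the gap sensitivity of $2$:
\begin{enumerate}
\item \texttt{FindBestK} is report-noisy-max with Gumbel noise of scale $2/\epsilon_1$ on a score of sensitivity $2$, plus a data-independent regularizer $r(k)$. This is exactly the exponential mechanism, so it is $\epsilon_1$-DP \cite{mcsherry2007mechanism}.
\item For \texttt{TopKWithPTR} I follow \citet{zhu2022adaptive}. The quantity $\max(2,d_k)$ still has sensitivity at most $2$, so releasing it with $\mathcal{N}(0,4\sigma^2)$ noise is the Gaussian mechanism; this gives some $(\epsilon_2,\delta')$ for an appropriate $\sigma$.
\end{enumerate}
For the PTR part I split into two cases.
\begin{itemize}
\item If $d_k>2$ on $D$, then no single neighbor can change the identity of the top-$k$ set: the $k$-th count drops by at most $1$ and the $(k+1)$-th rises by at most $1$, so the gap stays positive. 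The released set is therefore a deterministic function shared by $D$ and $D'$, and it is post-processing of the privatized test.
\item If $d_k\le 2$, the shift $\Phi(1-\delta;0,2\sigma)$ makes the probability of passing the test (and releasing a non-stable set) at most $\delta$.
\end{itemize}
Combining the two cases gives $(\epsilon_2,\delta)$ (or $\delta+\delta'$ before rebalancing). Finally, because $\hat{k}$ is chosen adaptively, I use adaptive composition: conditioned on $\hat{k}$, the PTR step is DP for every fixed $k$.

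\textbf{Main obstacle.} The hardest step is the claim that one document change affects at most one response. The retrieval step $R(\mathbf{x},D;N)$ is itself data dependent. Adding or removing a document can reshuffle the top-$N$ set: an inserted document displaces the $N$-th one, or a removal promotes the $(N+1)$-th. I would handle this by noting that under add/remove or replace adjacency the top-$N$ multiset changes by at most a swap of one element. Since the prompts are formed per document and are independent of rank, this changes at most one response, or two under replacement. If the count is two, I would double the sensitivity constants: Gumbel scale $4/\epsilon$, PTR threshold $4$. Otherwise I would state the theorem under add/remove adjacency. I would flag this carefully in the appendix, since the rest of the argument is standard.
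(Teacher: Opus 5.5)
Your proof is correct in structure and uses the same skeleton as the paper. One changed response gives gap sensitivity $2$. \texttt{FindBestK} is an exponential mechanism. \texttt{TopKWithPTR} is a Gaussian release of $\max(2,d_k)$ followed by the stable/unstable case split. The final call to $F$ is post-processing.

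The genuine difference is the privacy accounting. You stay in $(\epsilon,\delta)$-DP throughout: pure DP for \texttt{FindBestK}, then an $(\epsilon_2,\delta')$ Gaussian bound plus the additive PTR failure probability for the test, then basic adaptive composition. The paper works in R\'enyi DP instead. It converts the exponential mechanism to $(\alpha,\epsilon_{\text{EM}}(\alpha))$-RDP and treats \texttt{TopKWithPTR} as $\delta$-approximate $\frac{\alpha}{2\sigma^2}$-RDP. It then adds the two curves and converts to $(\epsilon,\delta)$-DP once, via the $\frac{\log(1/\delta)}{\alpha-1}$ term.

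Your version is more elementary and avoids the approximate-RDP machinery entirely. The paper's version optimizes a single $\alpha$ over both mechanisms and handles the Gaussian step natively. It is also the accounting that scales well if the system answers many queries. For one query with two mechanisms, the two final budgets are comparable.

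Your treatment of retrieval is more careful than the paper's, which simply asserts that the two top-$N$ sets differ in one document. Your fallback for replacement adjacency is unnecessary, though. Suppose $d$ in the top-$N$ set $S$ of $D$ is replaced by $d'$. Deleting $d$ promotes the $(N+1)$-th document $e$, which is then the lowest-ranked member, so if $d'$ enters it displaces exactly $e$. The net change is therefore a single swap, $d\to d'$ or $d\to e$. If $d\notin S$, the change is at most $s_N\to d'$. So one response changes under either adjacency, and no constants need doubling.

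One calibration slip, which the paper's appendix shares: adding $\texttt{Gumbel}(b)$ noise and taking the argmax samples $k$ with probability proportional to $\exp((d_k+r(k))/b)$. The exponential mechanism at level $\epsilon_1$ with $\Delta=2$ uses $\exp(\epsilon_1 d_k/4)$, so scale $2/\epsilon_1$ only certifies $2\epsilon_1$-DP. This bound is essentially attained. When one response is swapped, counts move in both directions, so one gap can rise by $2$ while a dominant gap falls by $2$, and the log-ratio then approaches $4/b$. Either use scale $4/\epsilon_1$ or charge $2\epsilon_1$. This changes only constants, not the qualitative theorem.
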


\textit{Proof sketch.} The high-level idea of the privacy analysis is that the final response $\mathbf{y}$ is a function of the extracted keywords $w_{1}, ..., w_{\hat{k}}$. The keywords were obtained from the ensemble of responses, which depend on the retrieved documents from the private dataset, and an estimate for the number of keywords $\hat{k}$, which also depends on the responses. Hence, it suffices to show that (1) \texttt{FindBestK} return $\hat{k}$ that is differentially private with respect to $D$, then show (2) \texttt{TopKWithPTR} is differentially private with respect to $D$. In regards to (1), we use the exponential mechanism to achieve differential privacy on $\hat{k}$. And (2) achieves differential privacy using standard analysis of PTR framework. For both (1) and (2), the global sensitivity of the utility function $d_k$ is $2$ because that for two neighboring external datasets $D$ and $D'$ they differ by one document. Subsequently, they will differ by one retrieved document, say the $i$-th $D^{\mathbf{x}}_{i} \neq D'^{\mathbf{x}}_{i}$. Hence the sensitivity of the utility function $d_k$ is $2$. Once we prove these two claims, then we can invoke the post-processing theorem of DP-- stating that a DP quantity does not leak additional privacy about the dataset-- to argue that the final response is DP. 

\begin{remark}
    The formal proof from Appendix \ref{sec:appendix} is relatively straightforward, as it relies on mostly well-established properties and theorems. However, we utilize Renyi Differential Privacy (RDP) \cite{mironov2017renyi} to perform our privacy analysis. Hence, we formally introduce and define these properties in terms of RDP in the Appendix due to spatial constraints. 
\end{remark}

%% file: body/6_experiment.tex
\section{Experiments}\label{sec:exp}

\subsection{Experimental Setup}\label{exp:setup}
\textbf{Dataset. } To evaluate the effectiveness of our proposed method, we conduct experiments on two widely used benchmark datasets in the RAG literature, Natural Questions (NQ) \citep{nq} and Trivia Question Answering (TQA) \citep{trivia}. Both datasets are designed for open-domain question answering and consist of a diverse set of real-world questions, each associated with multiple reference answers. Following standard RAG evaluations \citep{prerag,rag,ragged}, we use the Wikipedia corpus as our external knowledge source for retrieval. Due to practical constraints in computational resources, we follow prior work \citep{dprag1} and use a subset of 100 questions from each dataset. Note that we filter out questions with empty ground-truth references in both datasets in order to correctly compute the evaluation metrics. 


\textbf{Model Architectures. } Our system follows a standard retrieval-augmented generation (RAG) pipeline, composed of a retriever and a generator. For the retriever component, we adopt the Dense Passage Retriever (DPR) \citep{dpr}, a widely used dual-encoder architecture built on top of BERT \citep{bert}. DPR encodes both the input question and the candidate passages into dense vector embeddings and retrieves the most relevant documents by measuring similarity in the embedding space. In our experiment, we use the inner product as the similarity metric. In addition, given the large size of the Wikipedia corpus (21 million passages), we use FAISS \cite{faiss} to accelerate retrieval. FAISS leverages approximate nearest neighbor algorithms for efficient similarity search in high-dimensional embedding spaces.

For the generator, we compare several state-of-the-art large language models that have been instruction-tuned to follow user prompts. Specifically, we evaluate Qwen 2.5 (3B) \citep{qwen25}, Llama 3.2 (3B), and Llama 3.1 (8B) \citep{llama3}. The instruction-tuned versions of these models are particularly appropriate for QA tasks, as they are optimized to respond effectively to task-specific prompts.

\input{body/main_result_NQ}

\textbf{Baselines. } We evaluate the effectiveness of \method{} (Algorithm \ref{alg:dpkwrag}) by comparing it against three representative baselines, each reflecting a different level of privacy and retrieval capability.

The first baseline, denoted as \textbf{non-RAG} $\boldsymbol{(\epsilon = 0)}$, represents a strictly private setup where only the question is provided to the language model, without any access to external documents. Since no retrieval is involved, this configuration guarantees inherent privacy and serves as a minimal baseline. For \method{} to be considered practically useful, it must achieve better performance than this no-retrieval baseline under reasonable privacy budgets.

The second baseline, labeled as \textbf{KSA} $\boldsymbol{(\epsilon = \infty)}$, performs the same keyword extraction as \method{} but releases the top-$K$ keywords without introducing any noise. This non-private variant effectively removes the privacy constraints while preserving the aggregation structure of our method, making it a strong upper bound for our privacy-preserving approach.

The third baseline is the standard retrieval-augmented generation pipeline, denoted as \textbf{RAG} $\boldsymbol{(\epsilon = \infty)}$. In this setting, the question is paired with the top-2 retrieved documents from the external knowledge base, and the full documents are fed directly into the language model without any privacy protection. This setting represents a theoretical upper bound in terms of utility, as it leverages full retrieval with no noise or low-dimensional approximation.


\textbf{Settings. } We follow previous work \citep{dprag1} and set $\delta = 10^{-4}$. We select $\epsilon = \{1,2,3,5,8\}$ to achieve different levels of privacy. We set the number of ensembles to 80 for our method \method{} and non-private KSA method, which is found to be the optimal number of ensembles and detailed later in the ablation studies in section \ref{ablation:ensemble} . 

\textbf{Metrics. } Following prior works \citep{dpicl,ragf1}, we adopt four widely used evaluation metrics, F1, ROUGE-1, ROUGE-L and normalized Levenshitein similarity. These metrics collectively capture both lexical overlap and semantic alignment between generated outputs and ground-truth references. 
The F1 score computes the harmonic mean of precision and recall, and is particularly well-suited for QA tasks where partial correctness is meaningful. It reflects how well the predicted answer tokens align with those in the reference, rewarding both completeness and accuracy. 
ROUGE-1 measures unigram (i.e., single word) overlap between the prediction and the reference text. This provides a straightforward measure of lexical similarity. ROUGE-L, on the other hand, incorporates the sequential nature of language by computing the longest common subsequence (LCS), thus capturing not only which words are shared, but also whether they appear in a similar order, which is an important indicator of fluency and coherence in natural language generation. 
Levenshtein similarity is derived from the Levenshtein distance, which quantifies the number of single-character insertions, deletions, or substitutions required to transform one string into another. By normalizing this value, we obtain a score that reflects how closely the predicted answer resembles the reference at the character level, offering a fine-grained perspective on textual similarity. Across all four metrics, higher scores indicate better alignment between the generated and ground-truth answers, and thus better performance.

\input{body/main_result_TQA}
\subsection{Privacy-Utility Tradeoff of \method{}}\label{exp:results}
We show the privacy-utility tradeoff of \method{} across four evaluation metrics and three different LLMs on NQ and TQA datasets. We evaluate \method{} with different privacy budget $\epsilon$, ranging from 1 to 8. Smaller $\epsilon$ indicates stronger privacy constraints. We also compare our method with the non-RAG ($\epsilon=0$), non-private KSA ($\epsilon=\infty$) and non-private RAG ($\epsilon=\infty$) baselines.

\subsubsection{Performance on NQ dataset. }\label{results:nq}
Figure \ref{fig:exp_nq} presents the performance of \method{} on the NQ dataset. Across all models and metrics, we observe a consistent performance improvement as the privacy budget $\epsilon$ increases from 1 to 8, confirming that \method{} effectively trades off privacy for utility. Notably, at moderate privacy levels (e.g. $\epsilon =$ 3 or 5), \method{} substantially outperforms the strictly private non-RAG baseline, demonstrating its ability to retain useful keywords even under noise. For instance, with Llama 3.2 (3B) as the generator, the F1 score improves from 21.67 at $\epsilon = 0$ to 22.55 at $\epsilon = 2$, and further to 25.18 at $\epsilon = 8$, highlighting a strong privacy-utility tradeoff. However, at $\epsilon = 1$, performance tends to drop below or remain comparable to the non-RAG baseline ($\epsilon = 0$). This is likely due to the higher noise level required under strong privacy constraints, which hampers the success of the propose-test-release (PTR) condition $\mathbf{H}_{(k)} - \mathbf{H}_{(k+1)} > 2$, resulting in fewer or no keywords being released to support final response generation.

We find that \method{} often matches or even surpasses the performance of the non-private KSA ($\epsilon = \infty$) baseline, despite the presence of differential privacy noise. For example, using Qwen 2.5 (3B) as the generator, \method{} achieves ROUGE-1 and ROUGE-L scores at a moderate privacy budget ($\epsilon = 5$) that are comparable to or exceed those of non-private KSA. This suggests that a noisy histogram, constructed across ensembles of multiple responses, effectively preserves key semantic contents even in the presence of differential privacy noise. Additionally, unlike non-private KSA which deterministically selects the most frequent keywords, \method{}'s stochastic keyword release mechanism introduces variation across runs, which may help reduce overfitting to overly dominant but potentially less informative tokens. Overall, these findings underscore that controlled randomness applied over an ensemble of semantically coherent outputs enables strong utility-privacy tradeoffs. 

It is also worth noting that the non-private KSA baseline generally falls short of the upper bound set by the non-private RAG ($\epsilon=\infty$) baseline. This performance gap is likely stems from the information loss inherent in compressing an ensemble of responses into a fixed set of keywords. Exploring more effective strategies for utilizing these extracted keywords to further narrow the performance gap remains as an important direction for future work.

Interestingly, we observe that stronger generator models benefit more from \method{}. As we move from Qwen 2.5 (3B) to Llama 3.1 (8B), not only do we observe higher absolute scores, but also greater robustness to DP noise. This supports the intuition that larger models or certain model families are better equipped to infer context and generate meaningful outputs from partially informative or compressed prompts, making them especially well-suited for privacy-preserving QA systems.

Finally, although all four evaluation metrics follow consistent upward trends as $\epsilon$ increases, the degree of improvement varies. Levenshtein similarity, which captures fine-grained character-level differences, improves more noticeably, suggesting that the generated responses become semantically closer to the references as privacy budget increases. ROUGE-L and F1 also show steady improvements, confirming the \method{} yields gains across both string similarity and task-specific QA measures.

\subsubsection{Performance on TQA dataset. }\label{results:tqa}
Figure \ref{fig:exp_tqa} shows the performance of \method{} on the TQA dataset. Similar to the trends observed on NQ, we see consistent performance gains as the privacy parameter $\epsilon$ increases from 1 to 8, illustrating \method{}'s ability to effectively trade privacy for utility. Beginning at a privacy budget of $\epsilon = 2$, \method{} consistently outperforms the strictly private non-RAG ($\epsilon = 0$) baseline, demonstrating that even under strong privacy constraints, meaningful keywords can still be preserved. One example is when using Llama 3.2 (3B) as the generator, the F1 score improves from 55.12 at $\epsilon = 0$ to 56.92 at $\epsilon = 5$, indicating a strong privacy-utility tradeoff. Notably, unlike on NQ where performance at lower $\epsilon$ values occasionally regresses, \method{} on TQA exhibits smoother and more stable improvements even under tight privacy budgets. 

A particular interesting finding is that \method{} performs competitively against the non-private RAG ($\epsilon = \infty$) baseline in some cases. For instance, with Qwen 2.5 (3B), the ROUGE-1, ROUGE-L, and Levenshtein similarity scores at $\epsilon = 5$ exceed those of the non-private RAG baseline, despite the added differential privacy noise. This suggests that using keywords from a noisy histogram formed across an ensemble of responses can yield final responses that are semantically rich as those generated with full document contexts, particularly when the retrieved context is noisy or contains more redundancy.

Digging deeper into the baselines, we observe that the non-private KSA baseline actually outperforms non-private RAG in smaller models such as Qwen 2.5 (3B) and Llama 3.2 (3B), while the trend reverses for larger models like Llama 3.1 (8B). This difference can be attributed to the relative contributions of two information sources, namely the consensus of keywords captured by keyword frequency across the ensemble of 80 responses, and the parametric knowledge embedded in the generator LLM. For smaller models, the ensemble-driven signal dominates, giving KSA an advantage. However, as the generator becomes more powerful, its internal knowledge plays a larger role in improving generation quality and thus shifting the advantage toward full-context RAG.

In line with these observations, \method{} shows greater benefits with stronger generator models. Moving from Qwen 2.5 (3B) to Llama 3.1 (8B), we see not only higher absolute performance but also improved stability across different privacy budgets. This reinforces our earlier conclusion from the NQ results that larger and more capable models are better suited to generalize from sparse or compressed prompts, making them strong candidates for privacy-preserving RAG.

Lastly, all four metrics display consistent improvement as $\epsilon$ increases, but the extent of gains varies. Levenshtein similarity and ROUGE-L tend to improve more steeply, indicating that as more keywords are released, the generated final responses become semantically and structurally closer to the reference answers. F1 and ROUGE-1 also improve steadily, confirming that \method{} remains effective across both QA-style metrics and surface-level lexical overlap. Compared to NQ, \method{} on TQA dataset achieves significantly higher absolute scores, likely due to its more extractive answer structure and the higher redundancy in supporting context, which together make it easier for \method{} to preserve relevant information even under privacy constraints.

\input{body/main_result_PTR}
\subsubsection{PTR test pass rate.}\label{result:ptr_test}
We analyze the effect of various privacy constraints of \method{} on PTR test pass rates on both NQ and TQA datasets. Following the description of our method \method{} in section \ref{method:description}, the propose-test-release (PTR) test checks if the condition $\mathbf{H}_{(k)} - \mathbf{H}_{(k+1)} > 2$ holds. Figure \ref{fig:exp_ptr} shows the pass rate changes for different privacy parameters $\epsilon$, ranging from 1 to 8, across three different generator LLMs. Across both datasets, we observe a clear and consistent trend that the PTR test pass rate increases steadily as $\epsilon$ increases. This indicates that relaxing the privacy constraint reduces the amount of DP noise added, enabling more privately selected keywords to pass the PTR test and being released to aid the final model outputs. In other words, lower noise leads to more stable test outcomes and thus higher pass rates. This observation also supports our discussion earlier in the privacy-utility tradeoff performance on both datasets.

When comparing across models, we find that Qwen 2.5 (3B) trails slightly behind both Llama models on NQ and TQA. The Llama 3.2 (3B) model generally achieves the highest pass rates on the NQ dataset, especially showing steep gains between $\epsilon=1$ and $\epsilon=5$. In contrast, on the TQA dataset, the Llama 3.1 (8B) model slightly outperforms the others at higher privacy budgets ($\epsilon=5$ and $8$), suggesting that its larger capacity helps identify consistent keyword patterns even under significant DP noise. These trends highlight that stronger models, particularly those in the Llama family, exhibit better resilience to DP noise and are more effective at preserving semantic signals necessary for keyword extraction.

In terms of dataset differences, we observe that the TQA dataset consistently achieves higher PTR pass rates than NQ across all models and $\epsilon$ values. Even at $epsilon=1$, models on TQA start from a noticeably higher baseline. This discrepancy likely stems from differences in dataset structure. TQA answers tend to be more extractive and contextually redundant, which helps generate more consistent keyword distributions across ensembles. As a result, keywords in TQA are easier to identify under the added DP noise.

\subsection{Ablation Studies}\label{exp:ablation}
We also conduct ablation studies with varying model sizes and number of ensembles on NQ dataset to investigate how these hyper-parameters affect the performance of our proposed method \method{}. 

\subsubsection{Effect of Model Size}\label{ablation:modelsize}
Figure \ref{fig:ablation_model} illustrates the impact of model size on the performance of \method{} across all four metrics, F1, ROUGE-1, ROUGE-L and Levenshtein similarity, under a fixed privacy budget of $\epsilon=3$, along with non-private baselines for comparison. In this experiment, we evaluate the Qwen 2.5 model family at four scales, 1.5B , 3B, 7B and 32B. To accommodate hardware memory limitations, we use 4-bit quantization to the 32B model weights, ensuring that all models can be evaluated under the same computational constraints. 

As expected, larger models consistently achieve better performance across most metrics. As model size increases, we observe steady gains in F1 and ROUGE scores under the privacy-preserving setting, indicating that more expressive LLMs are more capable of compensating for the restricted input caused by providing only keywords as context in the prompts. Notably, even under a moderate privacy constraint of $\epsilon=3$, the 32B model nearly matches the performance of the fully non-private RAG baseline ($\epsilon=\infty$), demonstrating that \method{} benefits substantially from scaling to larger model capacities.

Interestingly, however, we observe a decline in Levenshtein similarity as model size increases to 32B. This seemingly counterintuitive trend can be explained by the generative behavior of larger models. Bigger LLMs, such as Qwen 2.5 (32B), are more likely to paraphrase reference answers with greater fluency and lexical diversity. While these outputs may be semantically accurate, they diverge more from the reference at the character level, which directly lowers Levenshtein similarity. Larger models may also produce longer or more elaborative responses, introducing more edits even if semantically valid, while smaller models are more conservative and literal in their generations. Therefore, the drop in Levenshtein similarity should be interpreted not as a sign of degraded quality, but rather as a side effect of more fluent and paraphrastic generation.

At the other end of the scale, the smallest model size (1.5B) reveals an unusual pattern. The non-RAG baseline ($\epsilon=0$), where only the question is given without any retrieved context, outperforms both the non-private KSA and \method{} ($\epsilon=3$) settings. This can be explained by the weaker instruction-following capabilities of smaller models. When provided with prompts using keyword-only context, whether deterministic (KSA) or noisy (\method{}), these models struggle to interpret and integrate sparse cues into coherent answers. In contrast, the non-RAG setting, which presents only the raw question, may align better with the pretraining distribution of small models and avoid potential confusion from prompts with incomplete or overly narrow keywords as context. Consequently, the model relies more heavily on its parametric knowledge, sometimes yielding better outputs than poorly constructed context.

\input{body/ablation_model_size}
Moreover, between KSA and \method{} ($\epsilon=3$) at 1.5B, the latter surprisingly performs better. One possible reason is that the stochastic nature of \method{} introduces diversity into the prompts, occasionally surfacing less frequent but semantically useful tokens. This noise-driven variation can act as a form of regularization that prevents the model from overfitting to dominant but uninformative keywords and this is a vulnerability particularly pronounced in smaller models.

Overall, these findings underscores that larger models are not only more effective at leveraging prompts with sparse context but also more robust to the limitations imposed by differential privacy. They suggest that \method{} is best paired with instruction-following-capable LLMs to achieve strong privacy-utility tradeoffs.

\input{body/ablation_ensemble}
\subsubsection{Effect of Number of Ensembles}\label{ablation:ensemble}

Figure~\ref{fig:ablation_ensemble} shows how the number of retrieval ensembles impacts the performance of \method{}. In this experiment, we use the Llama 3.2 (3B) generator model and fix the privacy budget at $\epsilon = 3$, while varying the number of ensembles from 10 to 100. The number of ensembles reflects how many top-ranked documents are used from retrieval. For instance, an ensemble size of 10 uses the top 10 retrieved documents, while a size of 50 incorporates the top 50. Each ensemble corresponds to a different retrieved document used to prompt the generator, and their aggregated responses are used to construct a noisy token histogram for differentially private keyword selection.

We observe a clear progression in performance trends across all four evaluation metrics as the number of ensembles increases.

When using between 10 and 40 ensembles, performance remains relatively stable. This indicates that aggregating a limited number of ensembles does not provide sufficient signal strength to overcome the DP noise introduced. With few ensembles, informative tokens may appear sporadically across responses, leading to sparse or inconsistent histograms where the DP mechanism struggles to identify high-frequency keywords reliably.

As the number of ensembles increases from 40 to 80, performance improves steadily. The addition of more ensembles enriches the token histogram, making it more representative and resilient to DP noise. Recurrent tokens consistently appearing across semantically relevant responses emerge more clearly, enabling the stochastic keyword release mechanism to retain more meaningful and informative content. The growing diversity of responses within this range plays a key role in stabilizing the keyword selection process under DP constraints.

Between 80 and 100 ensembles, performance begins to plateau. While more ensembles continue to be aggregated, the marginal gains diminish. Additional documents may contribute less relevant or redundant information, increasing the presence of low-utility or noisy tokens in the histogram. This can dilute the quality of the extracted keywords and limit further improvements in the final output quality.

These findings indicate that while increasing the number of ensembles can significantly enhance performance under DP, there are diminishing returns beyond a certain point. Once the ensemble size becomes large enough to establish a stable token distribution, adding more documents does little to improve and may even slightly degrade the quality of the extracted keywords due to the introduction of irrelevant or noisy content. This highlights the need for more strategic methods for selecting the number of ensembles to further boost performance without compromising the privacy guarantee.

%% file: body/main_result_NQ.tex
\begin{figure*}[!ht]
  \centering
    \includegraphics[width=\linewidth]{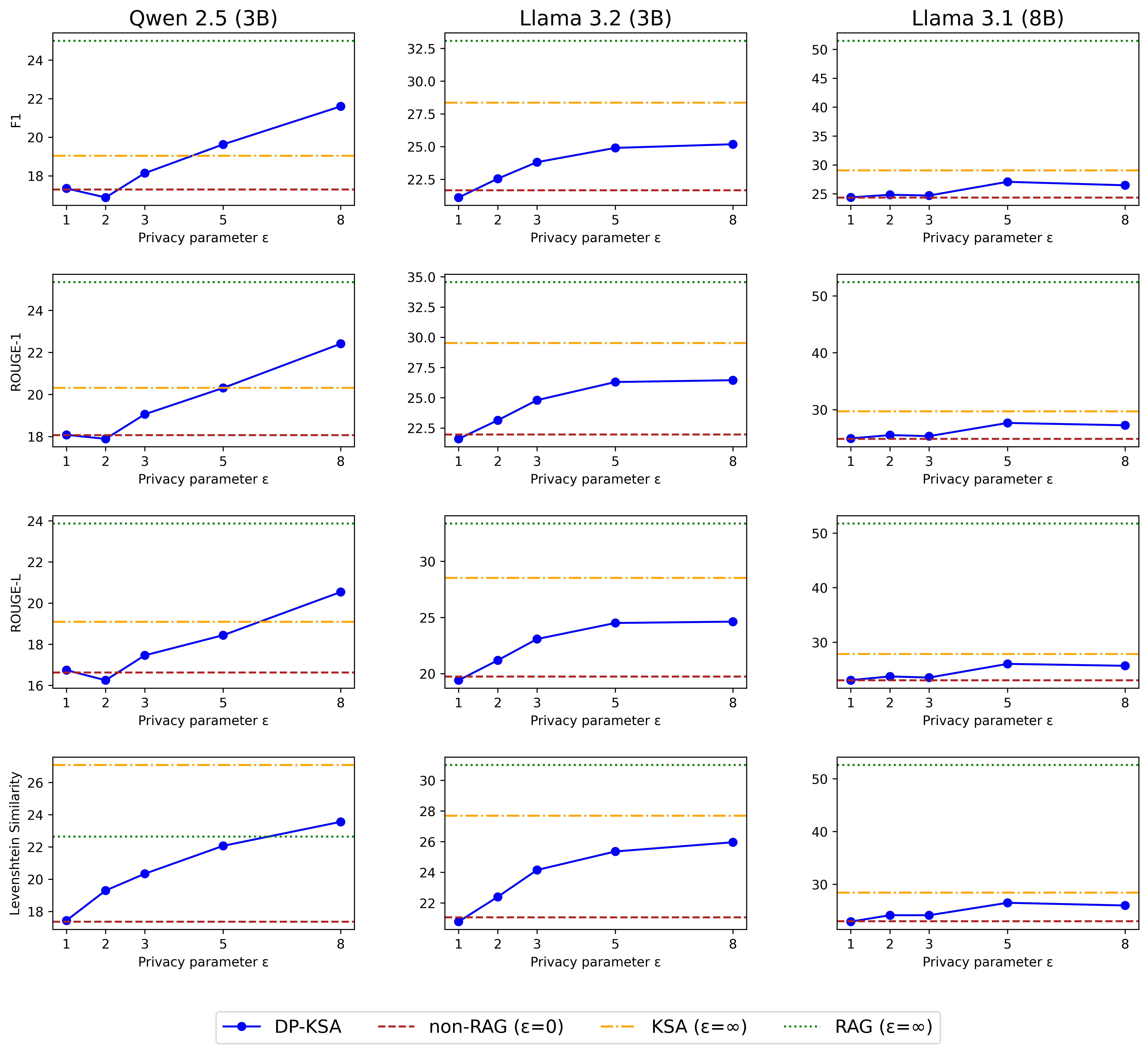}
  \caption{Results of \method{} on NQ dataset with different generator LLMs: Qwen 2.5 (3B), Llama 3.2 (3B), and Llama 3.1 (8B). We use three baselines including non-RAG ($\epsilon=0$), non-private RAG with top-2 retrieved documents ($\epsilon=\infty$), and non-private KSA ($\epsilon=\infty$).}
  \label{fig:exp_nq}
\end{figure*}

%% file: body/main_result_TQA.tex
\begin{figure*}[!ht]
  \centering
    \includegraphics[width=\linewidth]{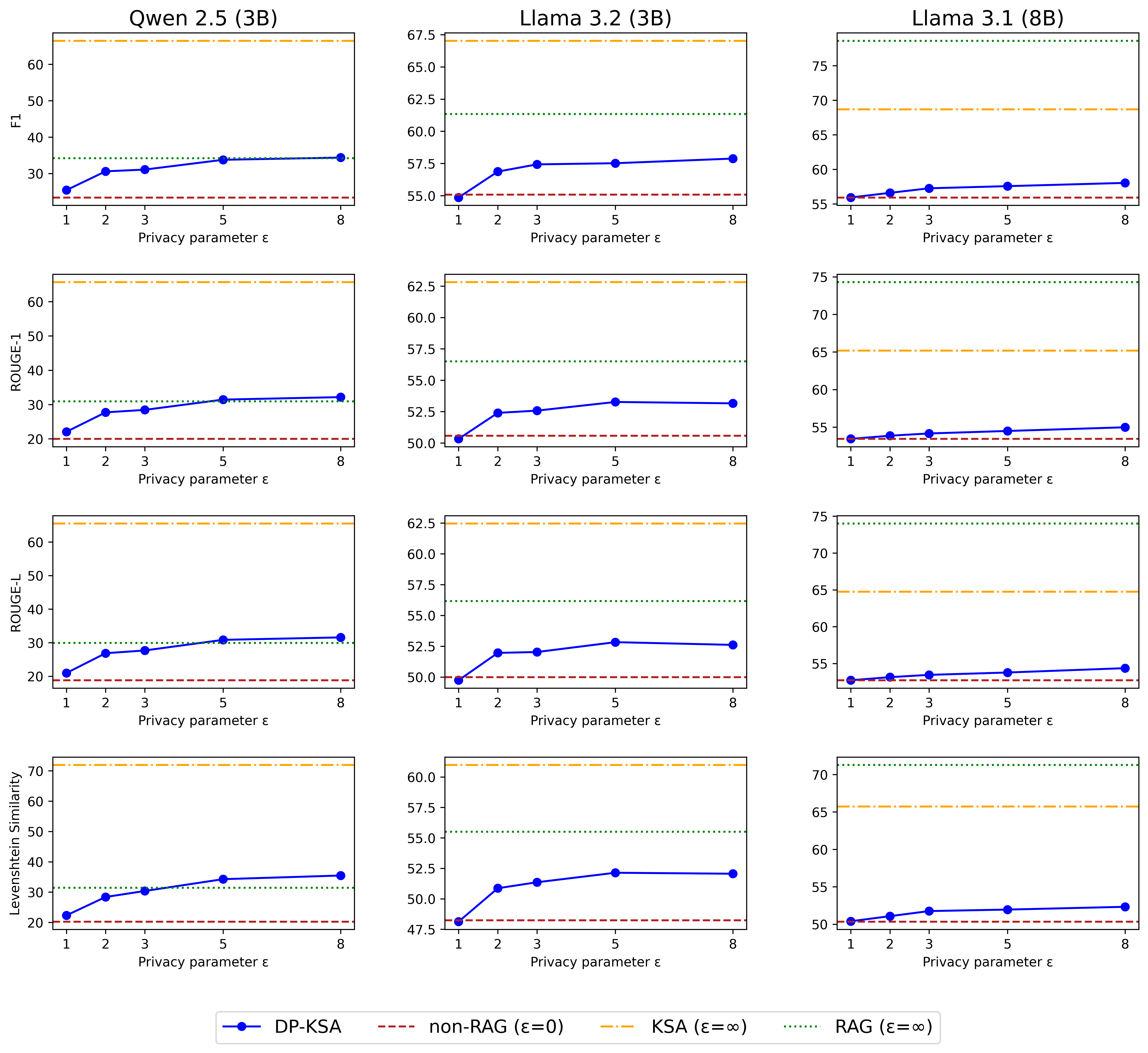}
  \caption{Results of \method{} on TQA  dataset with different generator LLMs: Qwen 2.5 (3B), Llama 3.2 (3B), and Llama 3.1 (8B). We use three baselines including non-RAG ($\epsilon=0$), non-private RAG with top-2 retrieved documents ($\epsilon=\infty$), and non-private KSA ($\epsilon=\infty$).}
  \label{fig:exp_tqa}
\end{figure*}

%% file: body/main_result_PTR.tex

\begin{figure}[!h]
  \centering
  \begin{subfigure}{.45\textwidth}
    \centering
    \includegraphics[width=.9\linewidth]{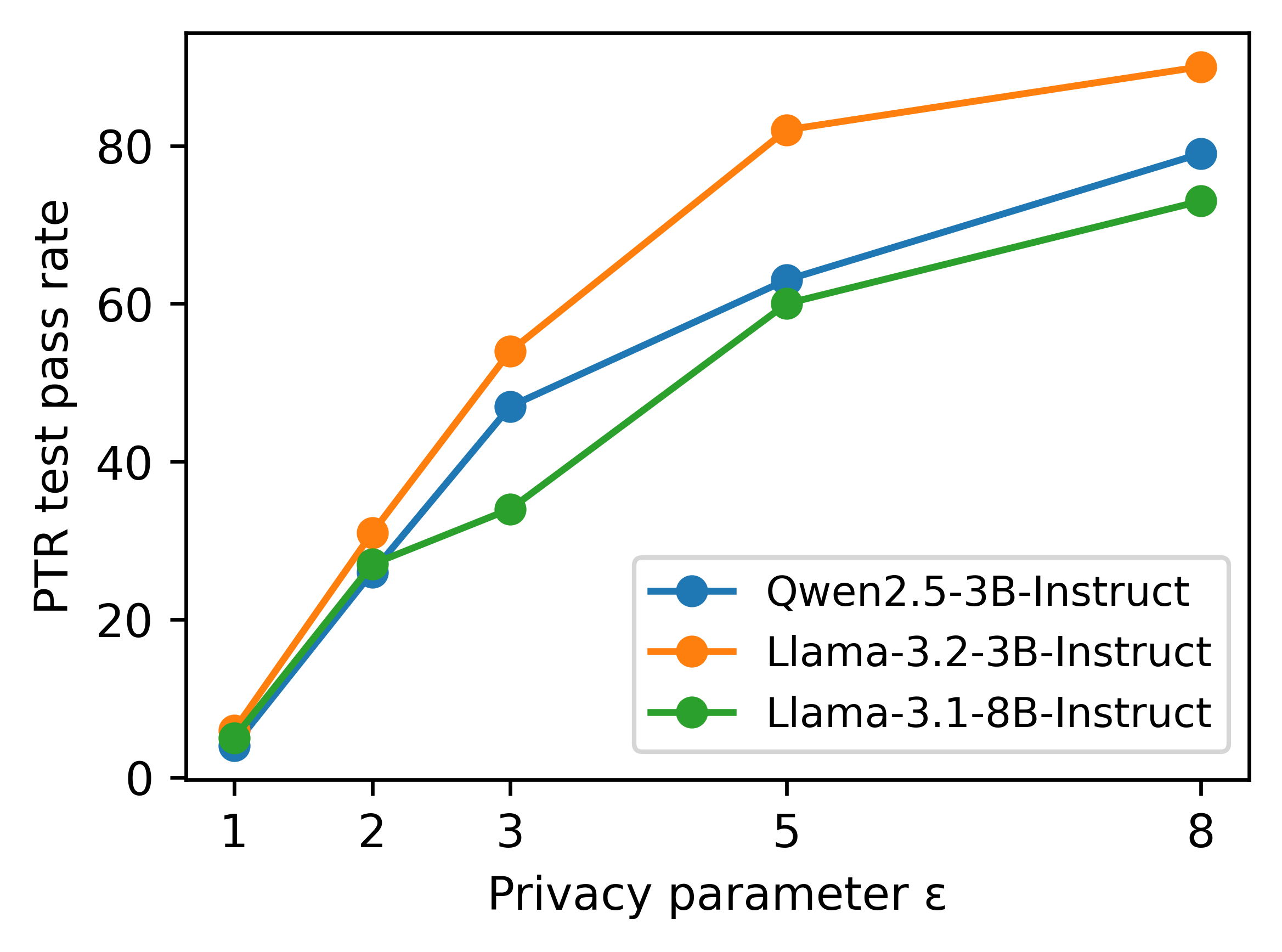}
    \caption{NQ}
  \end{subfigure}
  \hfill
  \begin{subfigure}{.45\textwidth}
    \centering
    \includegraphics[width=.9\linewidth]{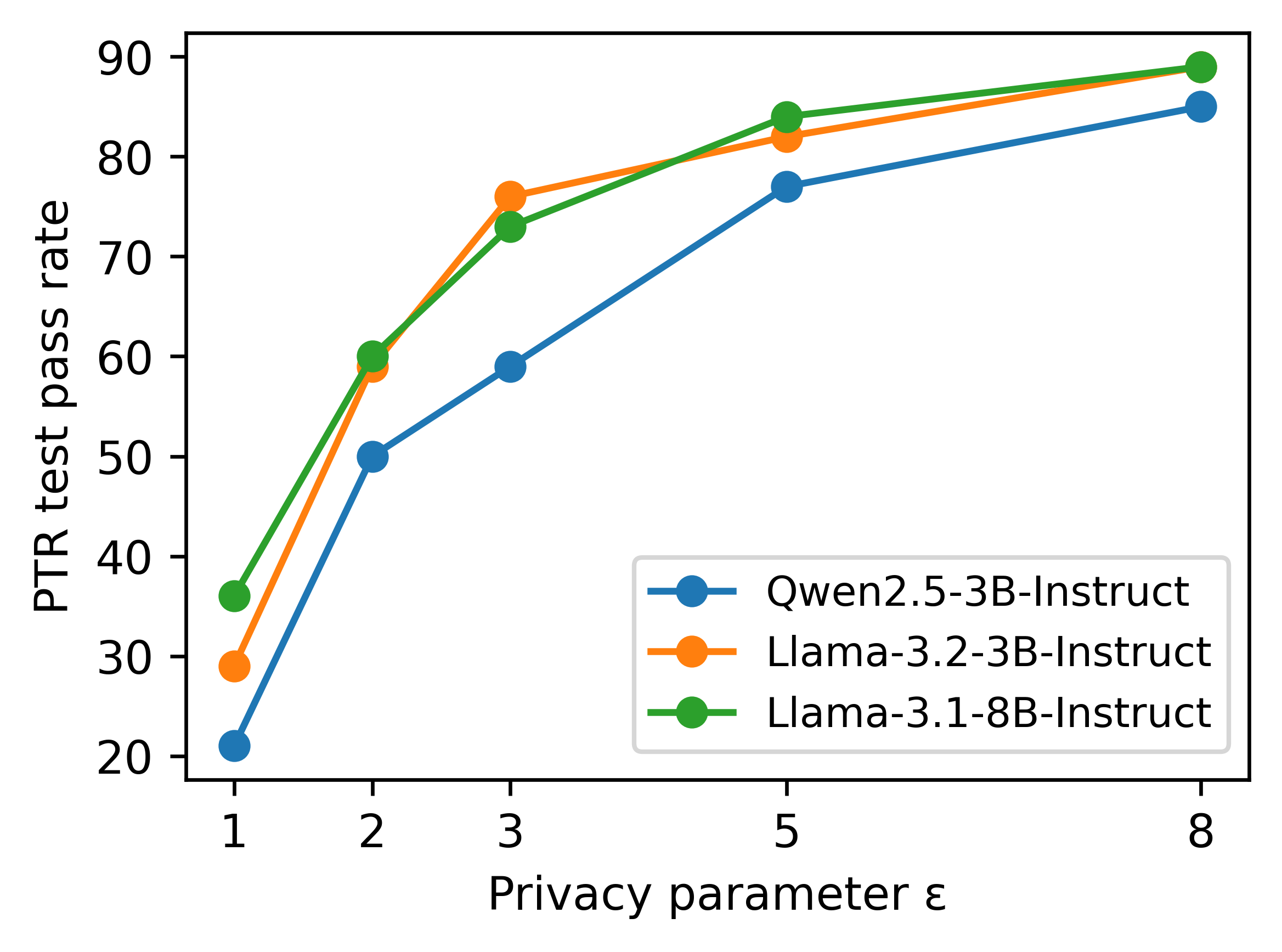}
    \caption{TQA}
  \end{subfigure}
  \caption{Propose-test-release (PTR) test pass rate of \method{} with varying privacy parameters $\epsilon$ on NQ and TQA datasets. We report the results with three different LLMs:  Qwen 2.5 (3B), Llama 3.2 (3B), and Llama 3.1 (8B).}
  \label{fig:exp_ptr}
\end{figure}


%% file: body/ablation_model_size.tex

\begin{figure}[!t]
  \centering
  \includegraphics[width=\linewidth]{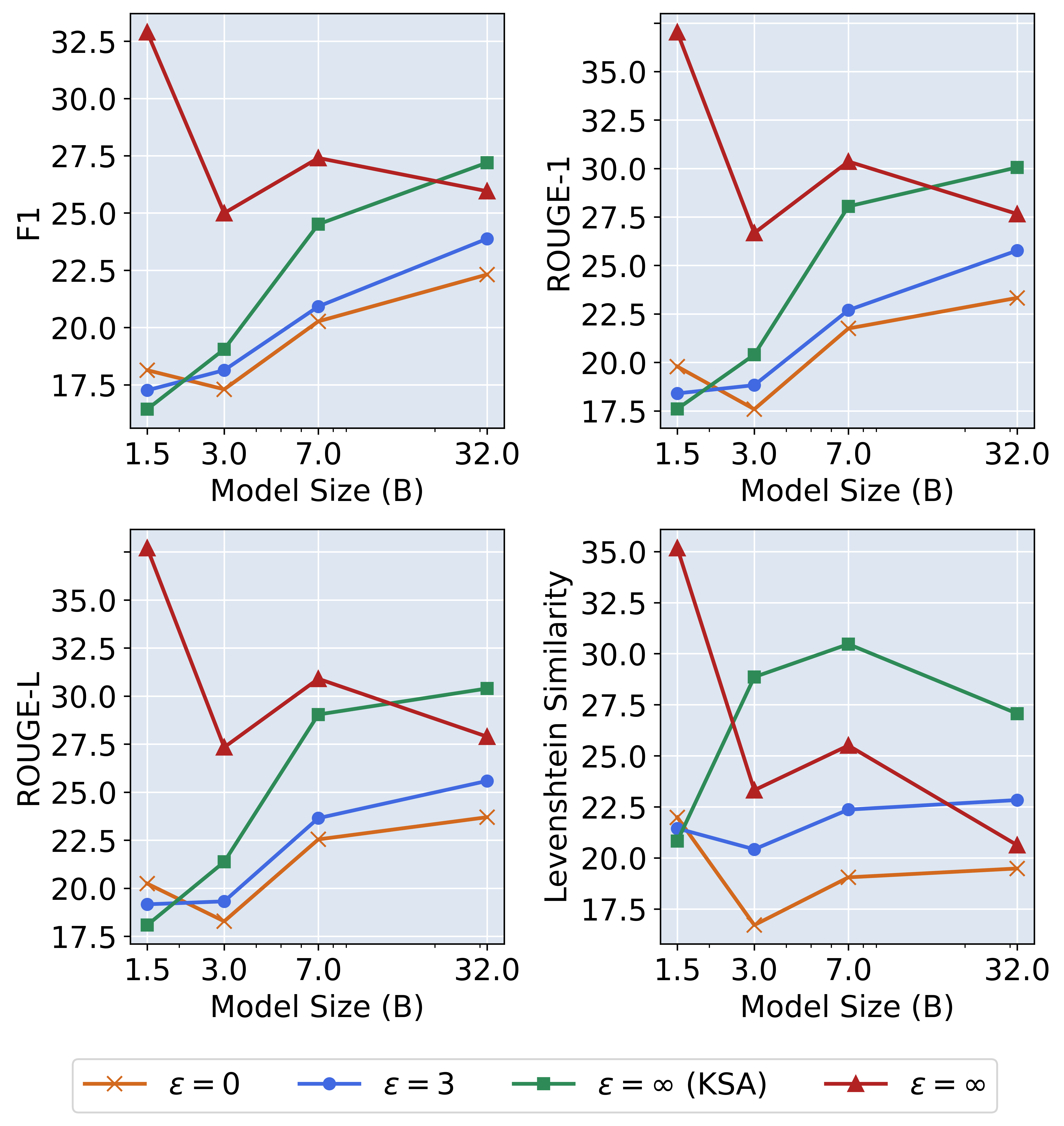}
  \Description{Plot showing ablation studies on model sizes with Qwen 2.5 model family on NQ dataset. Model size axis is in log scale.}
  \caption{Ablation studies on model sizes with Qwen 2.5 model family on NQ dataset. Model size axis is in log scale.}
  \label{fig:ablation_model}
\end{figure}

%% file: body/ablation_ensemble.tex
\begin{figure}[!h]
  \centering
    \includegraphics[width=0.9\linewidth]{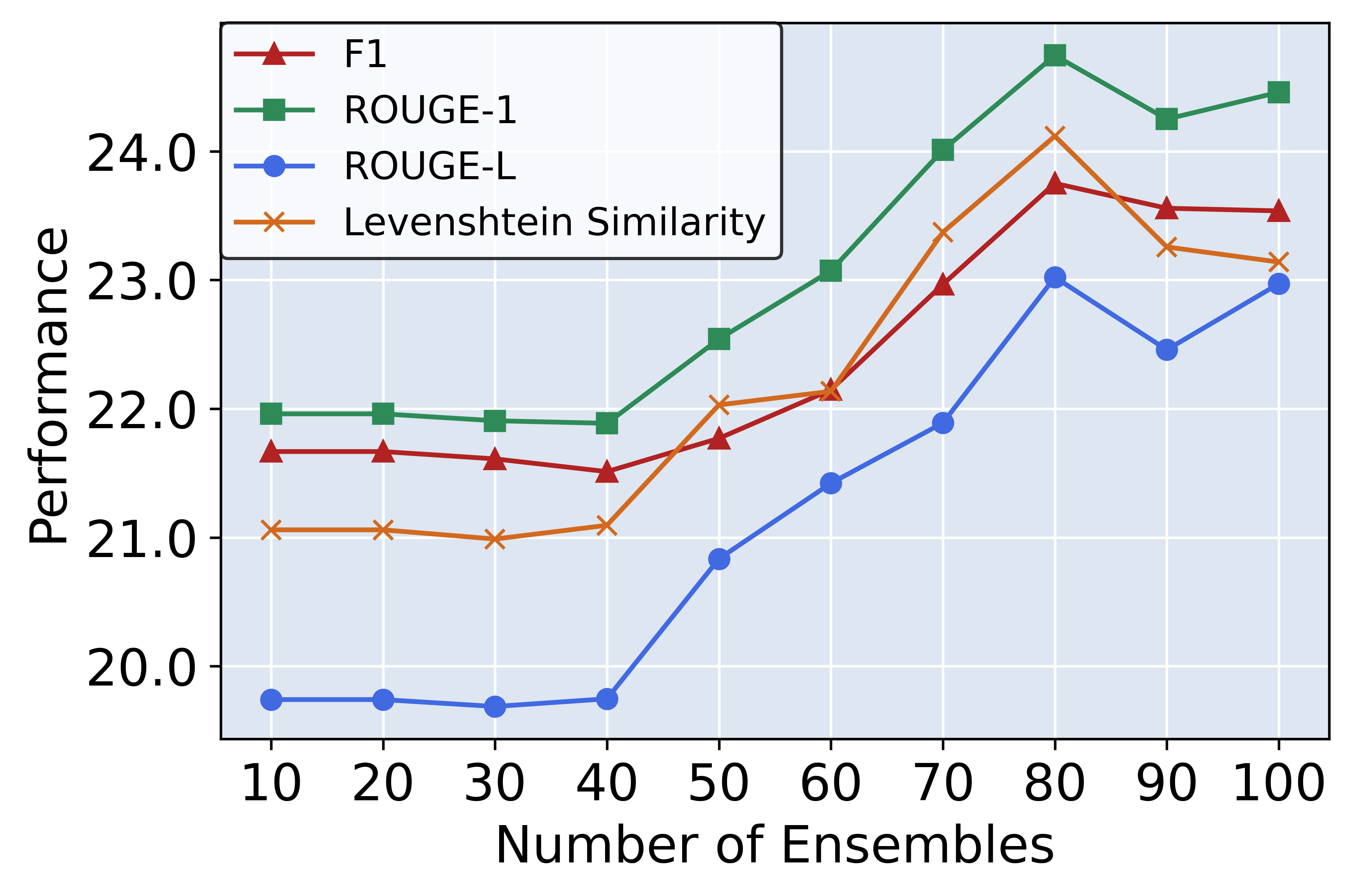}
  \caption{Ablation studies on number of ensembles with Llama 3.2 (3B) model and $\epsilon = 3$ on NQ dataset.}
  \label{fig:ablation_ensemble}
\end{figure}

%% file: body/7_relatedwork.tex
\section{Related Work}
\label{sec:relatedwork}

\subsection{Privacy Attacks on Large Language Models}\label{relatedwork:llmattack}

Prior work has shown that large language models (LLMs) are vulnerable to privacy breaches through a variety of attack vectors. \citet{llmattack} demonstrated that adversaries can extract memorized training examples, such as email addresses, phone numbers, and other sensitive content, by carefully crafting input prompts. These training data extraction attacks reveal that even models trained on supposedly de-identified datasets can still memorize and regurgitate specific private data points.
Prompt extraction attacks \citep{promptattack1, promptattack2, promptattack3} have also received significant attention. These attacks target deployed systems where users provide prompts that encode proprietary logic, credentials, or task-specific templates. Adversaries can use model inversion, gradient leakage, or black-box probing to infer the original prompts, threatening the confidentiality of model customization and user inputs in commercial APIs.

In the context of retrieval-augmented generation (RAG), recent work has highlighted privacy risks stemming from both the retrieval and generation components. \citet{ragattack3} analyzed $k$NN-LMs and showed that retrieved neighbors from the training corpus can expose private information, especially when the distance function reveals document structure or content characteristics. \citet{ragattack2} conducted a systematic investigation into open-source RAG pipelines and found that sensitive data in the retrieval corpus can propagate into generated responses without explicit prompts, particularly when the retriever ranks memorized or high-sensitivity content highly. \citet{ragattack1} extended these findings to commercial production-level RAG systems, showing that the configuration of RAG, including retrieval granularity, document ranking, and prompt construction, can significantly influence the model's susceptibility to data leakage.
Together, these studies underline the importance of privacy-aware design to mitigate the growing threat of information leakage in retrieval-augmented language generation.

\subsection{Privacy-Preserving Large Language Models}\label{relatedwork:ppllm}
Differential privacy has been studied in many tasks in large language models. The differentially private pretraining and finetuning of LLMs have been studied to address the privacy concern in the training data by deploying DP-SGD \citep{dpsgd}. In this paradigm, noise is introduced to the gradient during the model’s training to ensure privacy. However, as the scale of the large language models significantly increased, memory becomes a large bottleneck and makes this approach more challenging in practice. Although recent methods have been proposed for efficient per-example gradient clipping \citep{dpsgdllm} and parameter-efficient fine-tuning \citep{dpsgdft}, it remains a topic of ongoing research in order to address the engineering and optimization problems introduced by DP-SGD. In-context learning adapts to different tasks by illustrating some examples in the context as the task description. DP in-context learning considers the situation when the examples are picked from any private dataset. \citep{dpfewshot} tackles this problem by generating synthetic examples with DP. \citep{dpicl} instead uses a sample-and-aggregate algorithm to generate DP responses.

To mitigate privacy risks in RAG systems, \citep{syntheticrag} proposed an empirical privacy-preserving algorithm for RAG through the synthetic data generation, while our work studies privacy-preserving RAG in the framework of differential privacy, which protects the privacy of each individual document with the theoretical guarantee. Closely-related prior works have proposed DP solutions in the RAG setting by applying DP at every token generation of the LLM \citep{dprag1, dprag2} after retrieving relevant documents. While they offer privacy protection for RAG systems, they come with noticeable limitations. Per-iteration private voting requires composing the privacy loss over the number of tokens being generated, which quickly destroys the privacy-utility tradeoff.

%% file: body/8_conclusion.tex
\section{Conclusion}

In this paper, we presented \method{}, a novel privacy-preserving algorithm that ensures differential privacy for sensitive external data source in the RAG system, enabling us to enhance LLMs by domain-specific but sensitive external data source. \method{} privatizes RAG by extracting the most frequent keywords based on the "propose-test-release" paradigm and augmenting them into the prompt to generate the final output. The privately extracted keywords effectively compresses the semantic
space while still retaining key information pertaining to the retrieved contexts for better utility. The experiments on QA benchmarking datasets show that our algorithm outperforms the non-RAG method under moderate privacy budgets across different models, demonstrating its effectiveness for maintaining high generation quality while providing formal privacy guarantees. We also explored and evaluated the impact of generator model sizes and number of ensembles on our method.

In future work, we will consider a more adaptive scheme for selecting the number of ensembles to further improve privacy-utility tradeoffs across different generator model families and sizes. While \method{} targets scenarios where the external retrieval corpus is private and the generator LLM is trained on publicly available data, commonly seen in real-world RAG systems using open-source models, it is also worth investigating how \method{} affects the training data leakage risks when the generator LLM is pre-trained or fine-tuned on private datasets.

%% file: body/9_appendix.tex
\clearpage 
\appendix

\section{Privacy Analysis of \method{}}
\label{sec:appendix}
We now provide technical details behind the privacy analysis of \method{}. We first introduce some definitions and theorems to help with the analysis.

\subsection{Relevant Properties}

In the main text, we introduced $(\epsilon, \delta)$-DP (Definition \ref{def:approx_dp}). Additionally, we will introduce Rényi Differential Privacy (RDP), a variant of $(\epsilon, \delta)$-DP that uses Rényi divergence to measure the difference between $M(D)$ and $M(D')$. 

\begin{definition}[Rényi Divergence]
   For two probability distributions $P$ and $Q$ defined over $\mathcal{R}$, the Rényi divergence of order $\alpha > 1$ is
   \begin{equation*}
       D_{\alpha}(P || Q) = \frac{1}{\alpha -1}\log \mathbb{E}_{x \sim Q} \left [\left(\frac{P(x)}{Q(x)} \right)^{\alpha}\right ].
   \end{equation*}
\end{definition}

\begin{definition}($(\alpha, \epsilon(\alpha))$-RDP \cite{mironov2017renyi})
    A randomized algorithm $M: \mathcal{D} \rightarrow \mathcal{R}$ is $(\epsilon(\alpha), \alpha)$-RDP if for any adjacent datasets $D, D'\in \mathcal{D}$ it holds that $D_{\alpha}(M(D)||M(D'))\leq \epsilon(\alpha).$
\end{definition}

An advantage of RDP is its convenient composition properties, which states that the privacy loss of the composition of multiple RDP algorithms is simply the sum of each RDP algorithm. We state this more formally below.

\begin{theorem}[Composition \cite{mironov2017renyi}]\label{thm:composition}
    Let $A_1, ..., A_k$ be a sequence of $(\alpha, \epsilon(\alpha))$-RDP algorithms. Then the composition $A_k \circ A_{k-1} \circ ... \circ A_1$ is $(\alpha, k\epsilon(\alpha))$-RDP.
\end{theorem}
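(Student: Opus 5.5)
The plan is to prove the two-fold case directly from the definition of Rényi divergence and then induct on $k$. I read $A_k \circ \cdots \circ A_1$ in the adaptive sense: each $A_j$ receives the dataset $D$ and the outputs of $A_1,\dots,A_{j-1}$. Each $A_j$ is assumed $(\alpha,\epsilon(\alpha))$-RDP for every fixed value of these auxiliary inputs. This is the regime needed later for \method{}, where \texttt{TopKWithPTR} takes the output $\hat k$ of \texttt{FindBestK} as input.

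\textbf{Two-fold case.} Fix adjacent $D, D'$. Let $P$ and $Q$ be the joint laws of $(Y_1, Y_2)$ under $D$ and $D'$, where $Y_1 \sim A_1(\cdot)$ and $Y_2 \sim A_2(\cdot, Y_1)$. Write the densities with respect to a common dominating measure as
\begin{equation*}
P(y_1,y_2) = p_1(y_1)\,p_2(y_2\mid y_1), \qquad Q(y_1,y_2)=q_1(y_1)\,q_2(y_2\mid y_1).
\end{equation*}
Then
\begin{equation*}
\mathbb{E}_{Q}\!\left[\left(\tfrac{P}{Q}\right)^{\alpha}\right] = \int q_1(y_1)\left(\tfrac{p_1(y_1)}{q_1(y_1)}\right)^{\alpha} \left[\int q_2(y_2\mid y_1)\left(\tfrac{p_2(y_2\mid y_1)}{q_2(y_2\mid y_1)}\right)^{\alpha} dy_2\right] dy_1 .
\end{equation*}
For each fixed $y_1$, the bracketed inner integral equals $\exp\big((\alpha-1) D_\alpha(A_2(D,y_1)\,\|\,A_2(D',y_1))\big)$. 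By the RDP assumption on $A_2$, this is at most $e^{(\alpha-1)\epsilon(\alpha)}$, uniformly in $y_1$. Pulling this constant out, the remaining outer integral is at most $e^{(\alpha-1)\epsilon(\alpha)}$ by the RDP of $A_1$. Taking $\frac{1}{\alpha-1}\log$ gives $D_\alpha(P\|Q)\le 2\epsilon(\alpha)$.

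\textbf{Induction.} For general $k$, treat $(A_1,\dots,A_{k-1})$ as one mechanism. By the induction hypothesis it is $(\alpha,(k-1)\epsilon(\alpha))$-RDP. Composing it with $A_k$ via the two-fold argument gives $k\epsilon(\alpha)$; the two-fold argument works verbatim with unequal budgets $\epsilon_1+\epsilon_2$.

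If the composition is meant to release only the last output rather than the whole tuple, I would finish with the data-processing inequality for Rényi divergence. This says $D_\alpha(f(P)\|f(Q)) \le D_\alpha(P\|Q)$, which follows from Jensen's inequality applied to the convex map $t\mapsto t^\alpha$ on the conditional expectation of the likelihood ratio. Marginalizing onto the final output therefore does not increase the bound.

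\textbf{Main obstacle.} The computation itself is routine. The delicate points are measure-theoretic. First, the conditional densities $p_2(\cdot\mid y_1)$ and $q_2(\cdot\mid y_1)$ must exist, which I would handle via regular conditional probabilities. Second, if $P \not\ll Q$ the divergence is $+\infty$; this is excluded because finiteness of each $\epsilon(\alpha)$ already forces absolute continuity of each factor. Third, the crucial input is that the RDP bound for $A_2$ holds \emph{uniformly} over every auxiliary input $y_1$. I would make this uniformity explicit as part of the hypothesis, since without it the inner integral cannot be bounded by a constant.
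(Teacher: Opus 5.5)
Your proof is correct and is the standard adaptive-composition argument from Mironov's RDP paper, which the paper cites rather than proves: you factor the joint density, bound the inner conditional R\'enyi integral uniformly in $y_1$ by the second mechanism's guarantee, bound the outer integral by the first mechanism's guarantee, and then induct. One small correction to your framing is that the later use in \method{} composes \texttt{FindBestK} with \texttt{TopKWithPTR}, which is only $\delta$-approximately RDP, so that step needs the approximate-RDP version of composition (condition on the good events and add the failure probabilities), not just the pure-RDP statement you prove.
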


Another important property of RDP, and more generally DP, is that any further operations on the output of an RDP algorithm does not leak additional information, called the post-processing property.

\begin{theorem}[Post-Processing \cite{mironov2017renyi}]\label{thm:post-processing}
   Let $A: \mathcal{D} \rightarrow \mathcal{R}$ be $(\alpha, \epsilon(\alpha))$-RDP, and let $F: \mathcal{R} \rightarrow \mathcal{Z}$ be an arbitrary randomized mapping. Then $F \circ M$ is $(\alpha, \epsilon(\alpha))$-RDP.
\end{theorem}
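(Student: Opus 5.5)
The plan is to reduce the claim to a data-processing inequality for the Rényi divergence and verify it directly with Jensen's inequality. Read $F \circ M$ in the statement as $F \circ A$. Fix an arbitrary pair of adjacent datasets $D, D' \in \mathcal{D}$, and let $P$ and $Q$ be the output distributions of $A(D)$ and $A(D')$ on $\mathcal{R}$. By hypothesis $D_{\alpha}(P\|Q) \le \epsilon(\alpha)$.

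Model the randomized map $F$ as a Markov kernel $K(z \mid r)$ from $\mathcal{R}$ to $\mathcal{Z}$. It uses its own randomness, which is independent of the dataset. The outputs of $F\circ A$ on $D$ and $D'$ are then the mixtures
\begin{equation*}
(PK)(z)=\sum_{r} P(r)K(z\mid r), \qquad (QK)(z)=\sum_{r} Q(r)K(z\mid r).
\end{equation*}
I write sums for readability; in the general case these are integrals against a common dominating measure. It then suffices to show $D_\alpha(PK\|QK)\le D_\alpha(P\|Q)$. Taking the supremum over adjacent pairs gives that $F\circ A$ is $(\alpha,\epsilon(\alpha))$-RDP.

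For the key inequality, first dispose of the degenerate case. If $P$ is not absolutely continuous with respect to $Q$, the right-hand side is $+\infty$ and there is nothing to prove. Otherwise, for each $z$ with $(QK)(z)>0$, introduce the posterior weights $Q_z(r) = Q(r)K(z\mid r)/(QK)(z)$, which form a probability distribution on $\mathcal{R}$. These weights give the identity
\begin{equation*}
\frac{(PK)(z)}{(QK)(z)} = \mathbb{E}_{r\sim Q_z}\!\left[\frac{P(r)}{Q(r)}\right].
\end{equation*}
Since $t\mapsto t^{\alpha}$ is convex on $[0,\infty)$ for $\alpha>1$, Jensen's inequality yields
\begin{equation*}
\left(\frac{(PK)(z)}{(QK)(z)}\right)^{\alpha} \le \mathbb{E}_{r\sim Q_z}\!\left[\left(\frac{P(r)}{Q(r)}\right)^{\alpha}\right].
\end{equation*}
Now take the expectation over $z\sim QK$. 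The factor $(QK)(z)$ cancels the normalization in $Q_z$, and exchanging the order of summation together with $\sum_z K(z\mid r)=1$ gives
\begin{equation*}
\mathbb{E}_{z\sim QK}\!\left[\left(\frac{PK}{QK}\right)^{\alpha}\right]\le \mathbb{E}_{r\sim Q}\!\left[\left(\frac{P}{Q}\right)^{\alpha}\right].
\end{equation*}
Finally, $\log$ is monotone and $1/(\alpha-1)>0$, so applying $\frac{1}{\alpha-1}\log(\cdot)$ to both sides preserves the inequality. This gives $D_\alpha(PK\|QK)\le D_\alpha(P\|Q)\le\epsilon(\alpha)$.

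The main obstacle is measure-theoretic bookkeeping rather than the core idea. In general spaces the kernel needs densities with respect to a dominating measure, which requires a regular conditional distribution. The exchange of integrals needs Tonelli, which applies because the integrands are nonnegative. Points where $(QK)(z)=0$ but $(PK)(z)>0$ must be handled separately; this cannot happen under absolute continuity, because $PK \ll QK$ whenever $P\ll Q$. I would verify these points first in the discrete case, which covers the finite token outputs of \method{}, and then remark that the same argument extends verbatim via Radon--Nikodym derivatives.
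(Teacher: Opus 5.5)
Your proof is correct and takes essentially the same route as the paper, which gives no argument of its own and cites Mironov; Mironov's post-processing result rests on exactly the data-processing inequality $D_\alpha(PK\|QK)\le D_\alpha(P\|Q)$, which you derive directly via Jensen's inequality on the posterior weights $Q_z$. Your explicit convention $D_\alpha(P\|Q)=+\infty$ when $P\not\ll Q$ is genuinely needed: the paper's literal definition, an expectation under $Q$, drops the mass of $P$ outside the support of $Q$, so the divergence can be negative before outputs are merged and $0$ after, and post-processing would then fail.
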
 

Another useful relaxation of the RDP definition is approximate RDP. 

\begin{definition}[Approximate RDP \citep{bun2016concentrated, zhu2022adaptive}]
\label{def:approximate-RDP}
We say a randomized algorithm $M$ is $\delta$-approximately $(\alpha, \epsilon_{M}(\alpha))$-RDP with order $\alpha \geq 1$, if for all neighboring dataset $D, D'$, there exist events $E$ (depending on $M(D)$) and $E'$ (depending on $M(D')$) such that $\Pr[E] \geq 1-\delta$ and $\Pr[E'] \geq 1-\delta$, and $\forall \alpha \geq 1$, we have 
\begin{align}
D_\alpha \left(M(D)|E~\|~ M\left(D^{\prime}\right)|E^{\prime} \right) \leq \epsilon_{M}(\alpha)
\end{align}
\end{definition}

Note that when $\delta=0$, then $0$-approximate $(\alpha, \epsilon(\alpha))$-RDP is simply $(\alpha, \epsilon(alpha))$-RDP. Finally, we can convert between $(\alpha, \epsilon(\alpha))$ and $(\epsilon, \delta)$-DP, which is shown below.

\begin{theorem}[Conversion from approximate RDP to Approximate DP \cite{zhu2022adaptive}]\label{thm:rdp-dp}
    If an algorithm $A$ satisfies $\delta_1$-approximate $(\alpha, \epsilon(\alpha))$-RDP, then it is $(\epsilon(\alpha) + \frac{\log(1 / \delta)}{\alpha -1}, \delta + \delta_1)$-DP for any $0 < \delta < 1$.
\end{theorem}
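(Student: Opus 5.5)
The plan is to prove this in two stages. First, an \emph{exact} RDP-to-DP conversion for a single pair of distributions. Second, a mixture argument that absorbs the low-probability events excluded by the approximate RDP definition (Definition~\ref{def:approximate-RDP}) into the additive term $\delta_1$. Throughout, fix neighboring datasets $D, D'$, a measurable output set $S$, and an order $\alpha>1$. Write $\epsilon' := \epsilon(\alpha) + \frac{\log(1/\delta)}{\alpha-1}$.

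\textbf{Step 1 (pure conversion).} Let $P, Q$ be distributions with $D_\alpha(P\|Q)\le \epsilon(\alpha)$. I will show $P(S)\le e^{\epsilon'}Q(S)+\delta$.
\begin{itemize}
\item Apply Hölder's inequality to $P(S)=\mathbb{E}_{Q}\bigl[\tfrac{P}{Q}\mathbf{1}_S\bigr]$ with exponents $\alpha$ and $\alpha/(\alpha-1)$. This gives
\begin{equation*}
P(S)\le \Bigl(\mathbb{E}_Q\bigl[(P/Q)^\alpha\bigr]\Bigr)^{1/\alpha} Q(S)^{(\alpha-1)/\alpha}\le \bigl(e^{\epsilon(\alpha)}Q(S)\bigr)^{(\alpha-1)/\alpha}.
\end{equation*}
\item Split into two cases. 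If $P(S)\le\delta$, the claim is immediate.
\item Otherwise $P(S)>\delta$. Raising the Hölder bound to the power $\alpha/(\alpha-1)$ yields
\begin{equation*}
Q(S)\ge e^{-\epsilon(\alpha)}P(S)\,P(S)^{1/(\alpha-1)}> e^{-\epsilon(\alpha)}\delta^{1/(\alpha-1)}P(S),
\end{equation*}
which rearranges to $P(S)\le e^{\epsilon'}Q(S)$.
\end{itemize}

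\textbf{Step 2 (absorbing the bad events).} Let $E, E'$ be the events guaranteed by $\delta_1$-approximate RDP.
\begin{itemize}
\item Decompose $A(D)$ as the mixture $\Pr[E]\,P + (1-\Pr[E])\,P''$, where $P = A(D)|E$. Similarly decompose $A(D')$ with $Q = A(D')|E'$.
\item Since $\Pr[E]\ge 1-\delta_1$, we get $\Pr[A(D)\in S]\le \Pr[E]\,P(S)+\delta_1$.
\item Apply Step 1 to $P,Q$, which is legitimate because $D_\alpha(P\|Q)\le\epsilon(\alpha)$. This gives
\begin{equation*}
\Pr[A(D)\in S]\le \Pr[E]\bigl(e^{\epsilon'}Q(S)+\delta\bigr)+\delta_1\le e^{\epsilon'}\Pr[E]\,Q(S)+\delta+\delta_1.
\end{equation*}
\item It remains to compare $\Pr[E]\,Q(S)$ with $\Pr[A(D')\in S]\ge \Pr[E']\,Q(S)$.
\end{itemize}

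\textbf{Main obstacle.} The comparison above works cleanly only when $\Pr[E]\le\Pr[E']$. Otherwise a stray factor of $\Pr[E]/\Pr[E']\le 1/(1-\delta_1)$ appears. I would resolve this first by adopting the Bun--Steinke form of approximate RDP that the definition cites. In that form, both outputs are written as mixtures with the \emph{same} weight, $A(D)=(1-\delta_1)P+\delta_1P''$ and $A(D')=(1-\delta_1)Q+\delta_1Q''$, with $D_\alpha(P\|Q)\le\epsilon(\alpha)$. This can be arranged by shrinking the larger good event to have probability exactly $1-\delta_1$ and moving the excess mass into the bad component. The shrinking step needs a little care (for instance randomized thinning), because it must not break the divergence bound. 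The chain then closes:
\begin{equation*}
\Pr[A(D)\in S]\le(1-\delta_1)\bigl(e^{\epsilon'}Q(S)+\delta\bigr)+\delta_1\le e^{\epsilon'}\Pr[A(D')\in S]+\delta+\delta_1.
\end{equation*}
Since $S$ and the neighboring pair were arbitrary, this is exactly $(\epsilon',\delta+\delta_1)$-DP, as the statement claims. If the thinning step fails, the fallback is to accept the weaker bound with $e^{\epsilon'}$ replaced by $e^{\epsilon'}/(1-\delta_1)$.
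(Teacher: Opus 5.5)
The paper gives no proof of this statement. It is imported from \citet{zhu2022adaptive} and used as a black box, so there is no in-paper argument to compare your route against.

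Your argument is correct and is the standard one. Step~1 is Mironov's H\"older-based RDP-to-DP conversion, including the harmless case split at $P(S)\le\delta$. One small point there: the identity $P(S)=\mathbb{E}_Q[\tfrac{P}{Q}\mathbf{1}_S]$ needs $P\ll Q$, and this follows from finiteness of $D_\alpha(P\|Q)$ for $\alpha>1$. Step~2 is the usual absorption of the excluded events into the additive $\delta_1$.

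The ``main obstacle'' you flag is not real, and you can drop the fallback. Since $P(S)\le 1$, the map $t\mapsto tP(S)+(1-t)$ is nonincreasing. Hence
\begin{equation*}
\Pr[A(D)\in S]\le \Pr[E]\,P(S)+1-\Pr[E]\le (1-\delta_1)P(S)+\delta_1 ,
\end{equation*}
while on the other side $\Pr[A(D')\in S]\ge \Pr[E']\,Q(S)\ge (1-\delta_1)Q(S)$. Inserting Step~1 between these two bounds gives the full chain directly, and the ratio $\Pr[E]/\Pr[E']$ never appears.

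Your thinning fix is also sound. Intersecting $E$ with an independent coin leaves the conditional law $A(D)|E$ unchanged, so $D_\alpha(P\|Q)$ is untouched. You can even do it purely at the level of measures: the law of $A(D)$ minus $(1-\delta_1)P$ is a nonnegative measure of mass $\delta_1$. That avoids any question of whether the events in the definition may use auxiliary randomness.

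Either way, the claimed $(\epsilon(\alpha)+\frac{\log(1/\delta)}{\alpha-1},\,\delta+\delta_1)$-DP bound holds exactly, with no $1/(1-\delta_1)$ loss.
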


We use approximate RDP for a tighter measure of the privacy cost under composition. After we obtain the (approximate) RDP guarantee for the overall algorithm, we can then convert the privacy guarantee back into the standard DP definition (Theorem \ref{thm:rdp-dp}). 

Lastly, we introduce a fundamental DP mechanism, called the exponential mechanism \cite{mcsherry2007mechanism}, which the \texttt{FindBestK} is based on. Given some utility function over outputs, the exponential mechanism samples high-utility outputs with higher probability than low-utility outputs. 

\begin{definition}[Exponential Mechanism \cite{mcsherry2007mechanism}]
    Given a utility function $q: \mathcal{X}^* \times \mathcal{O} \rightarrow \mathcal{R}$ with $\ell_1$ sensitivity $\Delta(q) = \max_{D\sim D', o\in O} |q(D, o)-q(D', o)|$, the exponential mechanism $M$ has output distribution 
    \begin{equation*}
        \Pr[M(D) = o] \propto \exp\left (\frac{\epsilon q(D, o)}{2\Delta(q)} \right).
    \end{equation*}
    where $\propto$ elides the normalization factor.
\end{definition}

Given the above definition, one can show that the exponential mechanism satisfies $(\epsilon, 0)$-DP.

\begin{lemma}[\cite{mcsherry2007mechanism}]
    The exponential mechanism is $(\epsilon, 0)$-DP.
\end{lemma}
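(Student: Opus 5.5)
The plan is the standard direct likelihood-ratio argument. Since the output space $\mathcal{O}$ of \texttt{FindBestK} is finite, the exponential mechanism is a discrete distribution. It therefore suffices to bound the ratio of point probabilities
\[
\frac{\Pr[M(D)=o]}{\Pr[M(D')=o]} \le e^{\epsilon}
\]
for every $o\in\mathcal{O}$ and every adjacent pair $D,D'$. Summing over $o\in E$ then gives $\Pr[M(D)\in E]\le e^{\epsilon}\Pr[M(D')\in E]$, which is Definition~\ref{def:approx_dp} with $\delta=0$. If one wants a general (non-finite) output space, the same argument works with densities with respect to a base measure, and the sum is replaced by an integral.

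First, I would write the probability explicitly with its normalizer:
\[
\Pr[M(D)=o]=\frac{\exp\!\left(\frac{\epsilon q(D,o)}{2\Delta(q)}\right)}{Z(D)}, \qquad Z(D)=\sum_{o'\in\mathcal{O}}\exp\!\left(\frac{\epsilon q(D,o')}{2\Delta(q)}\right).
\]
The ratio then splits into two factors: the ratio of numerators, and the ratio $Z(D')/Z(D)$ of normalizers.

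Second, I would bound each factor using the definition of sensitivity. For each $o$ we have $q(D,o)-q(D',o)\le\Delta(q)$, so the numerator ratio is at most $\exp(\epsilon/2)$. For the normalizer, the same inequality applied termwise with roles swapped gives
\[
\exp\!\left(\tfrac{\epsilon q(D',o')}{2\Delta(q)}\right)\le e^{\epsilon/2}\exp\!\left(\tfrac{\epsilon q(D,o')}{2\Delta(q)}\right).
\]
Summing over $o'$ yields $Z(D')\le e^{\epsilon/2}Z(D)$. Multiplying the two factors gives the bound $e^{\epsilon}$.

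No step is a real obstacle. The only point needing care is that the normalizer depends on the dataset, which is exactly why the factor $2$ appears in the exponent: half of the budget is spent on the numerator and half on $Z$. I would also remark how this connects to the paper's use of the mechanism. The Gumbel-max implementation in \texttt{FindBestK} samples exactly from this distribution, and the data-independent regularizer $r(k)$ does not affect the sensitivity. So, with the sensitivity of $d_k$ equal to $2$, the lemma applies to \texttt{FindBestK} directly.
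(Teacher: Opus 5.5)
Your argument is correct and is exactly the standard McSherry--Talwar proof: half the budget is spent on the numerator and half on the data-dependent normalizer $Z$. The paper does not reprove the lemma; it simply cites that reference. One caution on your closing remark, which goes beyond the lemma: Gumbel noise of scale $2/\epsilon$ gives $\Pr[k]\propto\exp(\epsilon(d_k+r(k))/2)$, whereas the exponential mechanism with $\Delta=2$ is $\exp(\epsilon(d_k+r(k))/4)$. So \texttt{FindBestK} as written is the mechanism at parameter $2\epsilon$, and the lemma certifies only $2\epsilon$-DP for it; you would need scale $4/\epsilon$, or a sensitivity-$1$ argument, to get $\epsilon$.
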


Since we want to compose the privacy loss of the exponential mechanism and report the loss in terms of $(\epsilon, \delta)$-DP, we use the following property to convert from $\epsilon$-DP to $(\alpha, \epsilon(\alpha))$-RDP. 

\begin{theorem}[\cite{bun2016concentrated}]\label{thm:dp_rdp}
    The $M$ is $\epsilon$-DP then it is $(\alpha, \epsilon_{\text{EM}}(\alpha))$-RDP where
    \begin{equation*}
        \epsilon_{\text{EM}}(\alpha) := \min \left (\frac{\alpha}{2}\epsilon^2, \frac{1}{\alpha-1} \log \left (\frac{\sinh(\alpha\epsilon) - \sinh((\alpha-1)\epsilon)}{\sinh(\epsilon)} \right ) \right).
    \end{equation*}
\end{theorem}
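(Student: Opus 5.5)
The plan is to fix neighboring datasets $D, D'$, write $P = M(D)$ and $Q = M(D')$, and work with the likelihood ratio $L = P/Q$. Pure $\epsilon$-DP (Definition~\ref{def:approx_dp} with $\delta = 0$), applied in both directions, makes $P$ and $Q$ mutually absolutely continuous. It also gives the pointwise bound $L(x) \in [e^{-\epsilon}, e^{\epsilon}]$ for $Q$-almost every $x$, together with $\mathbb{E}_Q[L] = 1$. By the definition of Rényi divergence, $D_\alpha(P\|Q) = \frac{1}{\alpha-1}\log \mathbb{E}_Q[L^\alpha]$. So it suffices to bound $\mathbb{E}_Q[L^\alpha]$ in two separate ways and take the minimum of the two resulting bounds.

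\emph{The $\sinh$ bound.} Since $t \mapsto t^\alpha$ is convex for $\alpha > 1$, on the interval $[e^{-\epsilon}, e^{\epsilon}]$ it lies below its chord:
\begin{equation*}
L^\alpha \le \frac{e^{\epsilon}-L}{e^{\epsilon}-e^{-\epsilon}}\, e^{-\alpha\epsilon} + \frac{L-e^{-\epsilon}}{e^{\epsilon}-e^{-\epsilon}}\, e^{\alpha\epsilon}.
\end{equation*}
Take $\mathbb{E}_Q$ of both sides and use $\mathbb{E}_Q[L]=1$. The numerator becomes $(e^\epsilon-1)e^{-\alpha\epsilon} + (1-e^{-\epsilon})e^{\alpha\epsilon}$, which equals $2\sinh(\alpha\epsilon) - 2\sinh((\alpha-1)\epsilon)$. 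The denominator is $2\sinh(\epsilon)$. Dividing by $\alpha-1$ after taking logs gives exactly the second term of $\epsilon_{\text{EM}}(\alpha)$. This bound is attained by the two-point distribution, so no slack is lost here.

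\emph{The $\frac{\alpha}{2}\epsilon^2$ bound.} Define $f(t) = \log \mathbb{E}_Q[L^t] = \log \mathbb{E}_Q[e^{tZ}]$, where $Z = \log L \in [-\epsilon,\epsilon]$. The function $f$ is a cumulant generating function with $f(0) = 0$ and $f(1) = \log \mathbb{E}_Q[L] = 0$. Its second derivative $f''(t)$ is the variance of $Z$ under the tilted measure proportional to $e^{tZ}\,dQ$. A random variable supported in an interval of length $2\epsilon$ has variance at most $\epsilon^2$, so $f''(t) \le \epsilon^2$ for all $t$.

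Now set $g(t) = f(t) - \frac{\epsilon^2}{2}t(t-1)$. Then $g'' \le 0$, so $g$ is concave, and $g(0) = g(1) = 0$. A concave function that vanishes at $0$ and $1$ is nonpositive outside $[0,1]$, so $g(\alpha) \le 0$ for every $\alpha > 1$. This gives $\log \mathbb{E}_Q[L^\alpha] \le \frac{\epsilon^2}{2}\alpha(\alpha-1)$, hence $D_\alpha(P\|Q) \le \frac{\alpha}{2}\epsilon^2$. Combining the two bounds yields $D_\alpha(M(D)\|M(D')) \le \epsilon_{\text{EM}}(\alpha)$ for every pair of neighbors, which is the claim.

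The main obstacle is the second bound: it needs the tilted-variance computation and the concavity trick to be justified rigorously. The points to check are that differentiation under the expectation is legitimate (it is, because $Z$ is bounded) and that the variance bound $\epsilon^2$ holds uniformly in $t$ (it does, because it depends only on the range of $Z$). Measure-theoretic care with densities is routine once mutual absolute continuity is established. The discrete case, which is the one relevant to the output space of \texttt{FindBestK}, is immediate.
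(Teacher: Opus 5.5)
Your proof is correct. The paper gives no argument for this statement; it simply cites Bun and Steinke, so the comparison is with how the bound is usually established.

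Your $\sinh$ half is the standard convexity argument: the chord bound for $t\mapsto t^\alpha$ on $[e^{-\epsilon},e^{\epsilon}]$ together with $\mathbb{E}_Q[L]=1$. Your algebra for the numerator and denominator checks out.

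For the quadratic term you take an independent route through the cumulant generating function of $Z=\log L$. This is the same tilted-variance estimate that proves Hoeffding's lemma, but anchored at the two zeros $f(0)=f(1)=0$ rather than at $f(0)$ and $f'(0)$. That anchoring is what removes the $\mathbb{E}_Q[Z]$ term plain Hoeffding would leave behind, which would only give $\alpha^2$ in place of $\alpha(\alpha-1)$.

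A shorter alternative is to deduce the quadratic term from the $\sinh$ term. Using $\sinh A-\sinh B=2\cosh\frac{A+B}{2}\sinh\frac{A-B}{2}$ and $\sinh\epsilon=2\sinh\frac{\epsilon}{2}\cosh\frac{\epsilon}{2}$,
\[
\frac{\sinh(\alpha\epsilon)-\sinh((\alpha-1)\epsilon)}{\sinh(\epsilon)}=\frac{\cosh((\alpha-\frac{1}{2})\epsilon)}{\cosh(\frac{\epsilon}{2})}.
\]
The map $u\mapsto\log\cosh u-\frac{u^2}{2}$ is nonincreasing on $[0,\infty)$, since its derivative is $\tanh u-u\le 0$. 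Comparing $u=(\alpha-\frac{1}{2})\epsilon$ with $u=\frac{\epsilon}{2}$ shows the logarithm of this ratio is at most $\frac{\epsilon^2}{2}\alpha(\alpha-1)$.

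The two routes trade off as follows:
\begin{itemize}
\item The alternative is pure one-variable calculus and needs no differentiation under the expectation.
\item Yours needs a bit more justification, but it uses only the range of $Z$ and the normalization $\mathbb{E}_Q[e^{Z}]=1$. It therefore transfers verbatim to an asymmetric range $Z\in[a,b]$, yielding $D_\alpha(P\|Q)\le\frac{(b-a)^2}{8}\alpha$.
\end{itemize}

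Either way, combine the quadratic bound with your observation that the $\sinh$ bound is attained by a two-point pair. It follows that the $\sinh$ term never exceeds $\frac{\alpha}{2}\epsilon^2$. So the minimum defining $\epsilon_{\text{EM}}(\alpha)$ is always achieved by its second entry, and the first is only a convenient closed-form relaxation.

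Both your chord formula and the statement tacitly need $\epsilon>0$. At $\epsilon=0$ one has $P=Q$ and the claim is trivial.
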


\subsection{Privacy Analysis}

Now we will prove that \method{} satisfies $(\epsilon, \delta)$-DP. First, we obtain a privacy guarantee for \texttt{TopKWithPTR} and \texttt{FindBestK}.

\begin{theorem}\label{thm:ptr_dp}
    \texttt{TopKWithPTR} (Algorithm \ref{alg:ptr}) is $\delta$-approximate $\frac{\alpha}{2 \sigma^2}$-RDP.
\end{theorem}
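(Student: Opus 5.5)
We follow the standard propose-test-release analysis of \citet{zhu2022adaptive}. The only private step in \texttt{TopKWithPTR} is the noisy test statistic $\widehat d_k$. After the test, the output is a deterministic function of $\mathbf{H}$ and the one-bit test outcome, namely the exact top-$k$ set or ``no keywords.'' We therefore split the argument into two parts. First, the released bit (equivalently, the noisy quantity $\widehat d_k$) satisfies $\frac{\alpha}{2\sigma^2}$-RDP. Second, conditioned on a good event of probability at least $1-\delta$, releasing the exact top-$k$ set reveals nothing beyond that bit. Definition~\ref{def:approximate-RDP} then combines the two parts.

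\textbf{Step 1: sensitivity of the truncated gap.} Let $D\sim D'$. By the sample-and-aggregate structure (Section~\ref{dp:sampleandagg}), at most one retrieved document, and hence at most one response $\mathbf{y}_i$, changes. Each count $\mathbf{H}_j$ therefore changes by at most $1$. From this we show that $d_k=\mathbf{H}_{(k)}-\mathbf{H}_{(k+1)}$ changes by at most $2$, since order statistics of a vector are $1$-Lipschitz in $\ell_\infty$. The truncation $\max(2,\cdot)$ is $1$-Lipschitz, so $\max(2,d_k)$ also has sensitivity at most $2$. The Gaussian mechanism with noise $\mathcal N(0,4\sigma^2)$, i.e.\ standard deviation $2\sigma$, applied to a sensitivity-$2$ statistic is $(\alpha,\frac{\alpha\cdot 2^2}{2\cdot 4\sigma^2})=(\alpha,\frac{\alpha}{2\sigma^2})$-RDP. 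Subtracting the data-independent constant $\Phi(1-\delta;0,2\sigma)$ and thresholding at $2$ are post-processing (Theorem~\ref{thm:post-processing}).

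\textbf{Step 2: the good event.} Let $E$ be the event that the Gaussian noise $Z$ satisfies $Z\le \Phi(1-\delta;0,2\sigma)$, the $(1-\delta)$-quantile. Then $\Pr[E]\ge 1-\delta$ for every dataset. On $E$, if the test passes, then
\begin{equation*}
\max(2,d_k) \;\ge\; \widehat d_k + \Phi(1-\delta;0,2\sigma) - Z \;>\; 2 ,
\end{equation*}
so $d_k>2$. From this we argue that the top-$k$ set of $\mathbf{H}$ coincides with that of $\mathbf{H}'$ for every neighbor $D'$. Each count moves by at most $1$, so a gap exceeding $2$ cannot be closed and the top-$k$ set is identical. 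Thus, on $E$ (and the analogous $E'$ for $D'$), the full output of the algorithm is the same deterministic function $g(\text{bit})$ under both $D$ and $D'$. One subtlety: ``same deterministic function'' requires fixing the pair $(D,D')$ and noting that $g$ may be taken as the top-$k$ set of $D$, which equals that of $D'$ whenever the pass branch has positive conditional probability. The event $E$ is defined through the noise, so conditioning on it only truncates a Gaussian. This is where we must be careful. We plan to handle it by treating the whole mechanism as post-processing of $\widehat d_k$ restricted to $E$, and to use that the Rényi divergence of the conditioned outputs is bounded by the divergence of the bit, which is itself bounded via Step 1. Here we take the approach of \citet{zhu2022adaptive}: on $E$ the output equals the post-processing of an unconditioned Gaussian release, and the $\delta$ mass absorbs the rest.

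\textbf{Main obstacle.} We expect the hardest step to be making the conditioning argument in Step 2 rigorous. The issue is that the Rényi bound must hold for the conditional distributions $M(D)|E$ versus $M(D')|E'$, and not merely for the unconditioned Gaussian. We would first attempt to cite the corresponding PTR lemma in \citet{zhu2022adaptive} directly, checking only that our statistic has sensitivity $2$ and noise scale $2\sigma$. If that does not apply verbatim, we would argue instead via the coupling $\widetilde M$, which always outputs $g(\mathbb 1[\widehat d_k>2])$ using the common top-$k$ set. $\widetilde M$ is exactly $\frac{\alpha}{2\sigma^2}$-RDP by Step 1, and $M=\widetilde M$ on $E$. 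Together these yield $\delta$-approximate $\frac{\alpha}{2\sigma^2}$-RDP.
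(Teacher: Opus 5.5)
Your Step~1 and your observation that a passed test on $E$ forces $d_k>2$ (hence identical top-$k$ sets) are both correct. The step that combines them, however, does not go through.

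The definition of $\delta$-approximate RDP requires a bound on $D_\alpha(M(D)|E\,\|\,M(D')|E')$, and conditioning on an event can increase R\'enyi divergence without limit. Knowing that $M=\widetilde M$ outside an event of mass $\delta$, and that $\widetilde M$ is $\frac{\alpha}{2\sigma^2}$-RDP, supports a total-variation, $(\epsilon,\delta)$-style argument. It does not give the conditional R\'enyi bound.

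With your uniform choice $E=E'=\{Z\le \Phi(1-\delta;0,2\sigma)\}$, the argument actually fails. Take neighbors with $d_k(D)\le 2<d_k(D')$, e.g.\ $d_k(D)=2$ and $d_k(D')=4$; this happens when a single changed response adds the $k$-th ranked token and drops the $(k+1)$-th.
\begin{itemize}
\item Under $D$ the test passes iff $Z>\Phi(1-\delta;0,2\sigma)$, so $M(D)|E$ is a point mass on ``no keywords''.
\item Under $D'$ the test passes iff $Z>\Phi(1-\delta;0,2\sigma)-2$, which has positive probability on $E'$.
\end{itemize}
Hence $D_\alpha(M(D')|E'\,\|\,M(D)|E)=\infty$ for every $\alpha>1$. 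Since adjacency is symmetric, this ordering must be bounded too, so neither your events nor the coupling $\widetilde M$ rescue the argument.

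The missing idea is to choose the events per neighboring pair by a case split on the \emph{data}, not on the noise. This is what the paper's proof does with its cases $d_k>2$ and $d_k\le 2$, once made symmetric in the pair.
\begin{itemize}
\item If $\max(d_k(D),d_k(D'))>2$, the top-$k$ sets coincide, so the output is a post-processing of the \emph{unconditioned} $\widehat d_k$. Take $E,E'$ to be the whole space, and your Step~1 gives $\frac{\alpha}{2\sigma^2}$ directly.
\item If both gaps are $\le 2$, then $\max(2,d_k)=2$ for both datasets. The test then passes with probability exactly $\delta$, as in the paper's computation. Conditioning both runs on the test failing yields point masses on ``no keywords'', i.e.\ divergence $0$.
\end{itemize}
In the second case your noise event coincides exactly with the failure event, which is why your argument looks right there. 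Whenever either gap exceeds $2$, you should not condition at all.
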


\begin{proof}
The privacy analysis mostly follows from \citet{zhu2022adaptive}. Releasing the noisy threshold $\widehat d_k$ is $\frac{\alpha}{2 \sigma^2}$-RDP. 

    If $d_k > 2$, then releasing the exact top-$k$ tokens has no privacy cost, as its local sensitivity is 0. 

    If $d_k \le 2$, then if $\widehat d_k \le 2$, the program terminates and there's no privacy cost. 

    If $d_k \le 2$, the failure probability
    \begin{align*}
        \Pr[\widehat d_k > 2] 
        &= \Pr[ \max(2, d_k) + \mathcal{N}(0, 4\sigma^2) - \Phi(1-\delta; 0, 2\sigma) > 2 ] \\
        &= \Pr[ 2 + \mathcal{N}(0, 4\sigma^2) - \Phi(1-\delta; 0, 2\sigma) > 2 ] \\
        &= \Pr[ \mathcal{N}(0, 4\sigma^2) - \Phi(1-\delta; 0, 2\sigma) > 0] \\
        &= \delta
    \end{align*}
\end{proof}

\begin{theorem}\label{thm:findbestk_dp}
    \texttt{FindBestK} (Algorithm \ref{alg:rnm-find-k}) satisfies $(\alpha, \epsilon_{\text{EM}}(\alpha))$-RDP.
\end{theorem}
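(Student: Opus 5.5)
We will show that \texttt{FindBestK} is an instance of the exponential mechanism, so it is $\epsilon$-DP, and then apply Theorem \ref{thm:dp_rdp} to obtain $(\alpha, \epsilon_{\text{EM}}(\alpha))$-RDP.

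\emph{Step 1: Gumbel-max equals the exponential mechanism.} The utility function is $q(D, k) := d_k(D) + r(k)$ over the output set $\mathcal{O} = \{1, \ldots, N-1\}$. Adding i.i.d.\ Gumbel noise of scale $b$ to each score and taking the argmax samples $k$ with probability $\propto \exp(q(D,k)/b)$. This is the standard Gumbel-max trick, and it follows by integrating against the Gumbel CDF. With $b = 2/\epsilon$, Algorithm \ref{alg:rnm-find-k} therefore samples $k$ with probability $\propto \exp(\epsilon q(D,k)/2)$. This is exactly the exponential mechanism with $\Delta(q) = 1$ in the paper's normalization $\exp(\epsilon q/(2\Delta))$.

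\emph{Step 2: bound the sensitivity (main obstacle).} We need $|q(D,k) - q(D',k)| \le 1$ for all $k$ whenever $D, D'$ are neighbors. The regularizer $r(k)$ does not depend on the data, so it cancels, and only $d_k$ matters. Following the sample-and-aggregate argument of Section \ref{dp:sampleandagg}, a neighboring database changes at most one retrieved document, hence at most one response $\mathbf{y}_i$. Since $\mathbf{H}$ counts the number of responses containing each token, changing one response moves each count by at most $1$. By a standard order-statistic argument, each sorted value $\mathbf{H}_{(j)}$ then also moves by at most $1$. Consequently the gap $d_k = \mathbf{H}_{(k)} - \mathbf{H}_{(k+1)}$ changes by at most $2$.

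This bound of $2$ is the delicate point. The paper's own sketch states the sensitivity of $d_k$ as $2$, and the $\ell_\infty$ argument only gives $\Delta \le 2$. Under the literal parameter $\Delta = 2$, Gumbel scale $2/\epsilon$ corresponds to the exponential mechanism run at privacy level $2\epsilon$. I see two ways to close the gap:
\begin{itemize}
\item Use a one-sided (monotone) sensitivity argument in the style of \citet{zhu2022adaptive}, showing that the relevant score shift is effectively $1$.
\item Otherwise, read $\epsilon$ in the theorem as the exponential mechanism's parameter after accounting for the factor-$2$ scaling, so that the mechanism is $\epsilon'$-DP with $\epsilon'$ calibrated to the noise actually used.
\end{itemize}
I would first try the monotonicity route, since it would preserve the stated parameter.

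\emph{Step 3: conclude.} The exponential-mechanism lemma of \cite{mcsherry2007mechanism} stated above gives that \texttt{FindBestK} is pure $\epsilon$-DP. Theorem \ref{thm:dp_rdp} then converts this directly to $(\alpha, \epsilon_{\text{EM}}(\alpha))$-RDP for all $\alpha > 1$, which is the claim.
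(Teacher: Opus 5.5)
Your outline is the paper's proof: it invokes the Gumbel-max/exponential-mechanism equivalence, asserts $\Delta(d_k)=2$ (citing Zhu and Wang), and applies Theorem~\ref{thm:dp_rdp}. The paper never reconciles the noise scale $2/\epsilon$ with $\Delta=2$. The factor-$2$ discrepancy you found is therefore a defect in the paper's argument, not in your reading of it. $\texttt{Gumbel}(2/\epsilon)$ gives $\Pr[\hat k=k]\propto \exp(\epsilon q_k/2)=\exp\bigl(2\epsilon\, q_k/(2\cdot 2)\bigr)$, which is the exponential mechanism at level $2\epsilon$.

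The gap in your proposal is the repair you say you would try first. The monotone refinement would license $\exp(\epsilon q/\Delta)$ and close the factor exactly, but it needs every neighboring pair to move all scores in the same direction. The $d_k$ are differences of consecutive order statistics, so raising a single token's count by $1$ without reordering already increases one gap and decreases the adjacent one.

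No argument can recover $\epsilon$, because the factor is genuinely lost when one response is replaced. Take tokens $a,b,c,e$ with counts $M,10,8,5$, and let all other counts stay at most $2$ throughout. Replace a response containing $a,e$ but not $b,c$ by one containing $b,c$ but not $a,e$. Then $d_1$ falls from $M-10$ to $M-12$ while $d_3$ rises from $3$ to $5$. With $r\equiv 0$,
\[
\frac{\Pr[\hat k=3\mid \text{new}]}{\Pr[\hat k=3\mid \text{old}]}
= e^{\epsilon}\cdot\frac{\sum_k e^{\epsilon d^{\text{old}}_k/2}}{\sum_k e^{\epsilon d^{\text{new}}_k/2}}\approx e^{2\epsilon}
\]
once the $k=1$ term dominates both normalizers (e.g.\ $M=N=80$, $\epsilon\ge 1$). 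We have $\epsilon_{\text{EM}}(\alpha)\to\epsilon$ as $\alpha\to\infty$, and $D_\alpha$ increases to the max-divergence. So this example refutes the stated RDP bound when $\epsilon$ is the parameter in $\texttt{Gumbel}(2/\epsilon)$.

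So commit to your second option. Your Steps~1--3 then form a complete proof that Algorithm~\ref{alg:rnm-find-k} is $2\epsilon$-DP. Hence it is $(\alpha,\epsilon_{\text{EM}}(\alpha))$-RDP with $\epsilon_{\text{EM}}$ evaluated at $2\epsilon$. Equivalently, $\texttt{Gumbel}(4/\epsilon)$ noise is needed to get the parameter the paper intends.

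Scale $2/\epsilon$ would be correct only under add/remove adjacency on responses. There all counts move the same way, so $|d_k-d'_k|\le 1$. Top-$N$ retrieval instead turns a neighboring database into one \emph{replaced} retrieved document, as the paper's own proof of the main theorem states.
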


\begin{proof}
    Note that adding Gumbel noise to each output's utility and releasing the output with the highest noisy utility score is equivalent to using the exponential mechanism \cite{durfee2019practical}. Hence, \texttt{FindBestK} is an EM where the utility function is $d_k = \mathbf{H}_{(k)}-\mathbf{H}_{(k+1)}$ with the sensitivity $\Delta(d_k) = 2$ (this arguement derives from \citet{zhu2022adaptive}). Therefore, the privacy guarantee follows from Theorem \ref{thm:dp_rdp}.
\end{proof}

\begin{theorem}
    \method{} (Algorithm \ref{alg:dpkwrag}) satisfies $(\epsilon, \delta)$-DP. 
\end{theorem}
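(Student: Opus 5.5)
The plan is to view \method{} (Algorithm \ref{alg:dpkwrag}) as an adaptive composition of two private mechanisms followed by post-processing, and to carry out the accounting in approximate RDP before converting to $(\epsilon,\delta)$-DP.

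First, the data-dependent part of the algorithm is only the histogram $\mathbf{H}$, and it is accessed in exactly two places: $\hat{k} = \texttt{FindBestK}(\mathbf{H})$ and $\texttt{TopKWithPTR}(\hat{k}, \mathbf{H})$. Retrieval and the $N$ per-document generations are intermediate computations that are never released. All of the privacy argument is therefore funneled through the sensitivity of the gap statistic $d_k=\mathbf{H}_{(k)}-\mathbf{H}_{(k+1)}$ with respect to a change of one document in $D$. I would restate this sensitivity explicitly. A neighboring database changes at most one retrieved document, hence at most one response $\mathbf{y}_i$. Since $\mathbf{H}$ counts the number of responses containing each token, each count moves by at most $1$, and each order statistic $\mathbf{H}_{(j)}$ also moves by at most $1$. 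This gives $|d_k(D)-d_k(D')|\le 2$, the value used in Theorems \ref{thm:ptr_dp} and \ref{thm:findbestk_dp}.

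Next I would invoke Theorem \ref{thm:findbestk_dp}: releasing $\hat{k}$ is $(\alpha,\epsilon_{\text{EM}}(\alpha))$-RDP, which is $0$-approximate RDP. Conditioned on any fixed $\hat{k}$, Theorem \ref{thm:ptr_dp} says the output of \texttt{TopKWithPTR} is $\delta$-approximately $\frac{\alpha}{2\sigma^2}$-RDP. The RDP composition (Theorem \ref{thm:composition}) adds the RDP losses, since the second mechanism may depend adaptively on $\hat{k}$. The bad events of the approximate RDP definition only occur in the second stage, so their probabilities carry over. Together these give that the pair $(\hat{k}, \{w_i\})$ is $\delta$-approximately $(\alpha,\epsilon_{\text{EM}}(\alpha)+\frac{\alpha}{2\sigma^2})$-RDP.

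The final response $\mathbf{y}=F(\mathbf{x},\{w_i\})$ is a randomized function of the released keywords and the public query and model alone, so Theorem \ref{thm:post-processing} preserves the guarantee. Applying Theorem \ref{thm:rdp-dp} with an auxiliary $\delta'$ then gives $\big(\epsilon_{\text{EM}}(\alpha)+\frac{\alpha}{2\sigma^2}+\frac{\log(1/\delta')}{\alpha-1},\,\delta'+\delta\big)$-DP for every $\alpha>1$. The stated $(\epsilon,\delta)$ is obtained by calibrating the Gumbel scale, $\sigma$, and the PTR failure probability so that the minimum of this expression over $\alpha$ is at most the target. The $\delta$ in the statement should be read as the total $\delta'+\delta_{\text{PTR}}$.

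The main obstacle I expect is the composition step with approximate RDP. The composition theorem as stated is for pure RDP, and the PTR guarantee is only approximate with data-dependent good events. I would handle this by conditioning on the event that the PTR test does not falsely pass when $d_{\hat{k}}\le 2$. This event has probability at least $1-\delta$ for every value of $\hat{k}$, and on it the joint output distribution satisfies ordinary RDP composition. This is the adaptive composition lemma for approximate RDP in \citet{zhu2022adaptive}, which I would cite rather than reprove.
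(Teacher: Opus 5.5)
Your proposal is correct and follows the paper's proof essentially step for step: the sensitivity-$2$ bound on $d_k$ from a one-document change, Theorem~\ref{thm:findbestk_dp} for $\hat{k}$, Theorem~\ref{thm:ptr_dp} for the keywords, composition, post-processing into $\mathbf{y}$, and conversion via Theorem~\ref{thm:rdp-dp}, with the final $\delta$ being the sum of the conversion and PTR failure terms. You are right, and more careful than the paper, that Theorem~\ref{thm:composition} covers only pure RDP, but rest that step on the cited adaptive composition lemma rather than your one-line conditioning heuristic, which is not right as stated: a single event ``PTR does not pass when $d_{\hat{k}}\le 2$'' is not adapted to the neighboring pair (when $d_{\hat{k}}(D)\le 2<d_{\hat{k}}(D')$ it removes an output that $D'$ still produces with positive probability, giving infinite divergence) and it reweights the law of $\hat{k}$ by $k$-dependent factors, so the events must be chosen per pair and, e.g., thinned to probability exactly $1-\delta$ for every value of $\hat{k}$.
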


\begin{proof}
    Let $D, D'$ be two adjacent datasets the differ by one document. Hence, after retrieving the $N$ most relevant documents for a query $\mathbf{x}$, $R(\mathbf{x}, D; N)$ and $R(\mathbf{x}, D'; N)$ differ by at most one document. Suppose it is the $i$-th document. More precisely, $D^{\mathbf{x}}_{j} = D'^{\mathbf{x}}_j$ $\forall j \neq i$ and $D^{\mathbf{x}}_{i} \neq D'^{\mathbf{x}}_i$. Then we obtain the corresponding histograms $\mathbf{H}$ and $\mathbf{H}'$, which differ by one response $\mathbf{y}_i \neq \mathbf{y}'_i$. Hence by Theorem \ref{thm:findbestk_dp} $\hat{k}$ is $(\alpha, \epsilon_{\text{EM}}(\alpha))$-RDP. Then using $\hat{k}$ we have $w_{1}, ..., w_{\hat{k}}$, which are $\delta_1$-approximate $\frac{\alpha}{2 \sigma^2}$-RDP by Theorem \ref{thm:ptr_dp}. Then by composition (Theorem \ref{thm:composition}) the total privacy loss is $\delta_1$-approximate $(\alpha, \epsilon_{\text{EM}}(\alpha) + \frac{\alpha}{2 \sigma^2})$-RDP. Moreover, because $w_{1}, ..., w_{\hat{k}}$ are RDP, any further operations on them do not leak additional privacy by post-processing (Theorem \ref{thm:post-processing}). Hence, using $w_{1}, ..., w_{\hat{k}}$ for obtaining the final output $\mathbf{y} \gets F(\mathbf{x}, \{w_i\}_{i=1}^{\hat{k}})$ is DP. Finally we convert the final privacy loss back to $(\epsilon, \delta)$-DP using Theorem \ref{thm:rdp-dp}.
\end{proof}